\pdfoutput=1
\documentclass[nospthms]{svjour3}

\newif\ifarxiv
\arxivtrue   
\ifarxiv
  \usepackage[papersize={155mm,235mm},total={122mm,198mm},centering]{geometry}
\fi

\newif\ifversionnote
\versionnotetrue   

\usepackage[T1]{fontenc}
\usepackage[utf8]{inputenc}

\usepackage{amsmath,amssymb,amsfonts,amsthm}
\usepackage{mathtools}
\usepackage{braket}
\usepackage{siunitx}
\usepackage{graphicx}
\usepackage{booktabs}
\usepackage{enumitem}
\usepackage{microtype}
\usepackage[hidelinks]{hyperref}

\numberwithin{equation}{section}

\makeatletter
\def\@thmcountersep{.}

\@for\sv@released:={thm,lemma,cor,prop,defin,exa,rem}\do{%
  \expandafter\let\csname \sv@released\endcsname\relax
  \expandafter\let\csname end\sv@released\endcsname\relax
}
\makeatother

\newtheorem{thm}{Theorem}[section]
\newtheorem{lemma}[thm]{Lemma}
\newtheorem{cor}[thm]{Corollary}
\newtheorem{prop}[thm]{Proposition}
\theoremstyle{definition}
\newtheorem{defin}[thm]{Definition}
\newtheorem{exa}[thm]{Example}
\theoremstyle{remark}
\newtheorem{rem}[thm]{Remark}

\DeclareMathOperator{\Span}{span}
\DeclareMathOperator{\supp}{supp}
\DeclareMathOperator{\Ad}{Ad}
\DeclareMathOperator{\Lie}{Lie}

\title{Point-group filtering in unitary coupled-cluster ans\"atze:\\
exact criticality under the Abelian filter, freeness measured\\
near equilibrium under the full group%
\ifversionnote
\thanks{%
\textbf{Note on versions.} This is version 3 of \texttt{arXiv:2603.21009}
\cite{dasilva2026}; versions 1 and 2 remain public. Their inference of
variational incompleteness from a restricted dynamical Lie algebra does not
follow, and the 21.8~mHa they report for NH$_3$ in STO-3G was an
artifact of an incomplete excitation pool in our own code as it stood for
those versions, not a cost of the filter. Section~\ref{SS:artifact} carries both diagnoses.}%
\fi
}

\titlerunning{Point-group filtering in unitary coupled-cluster ans\"atze}

\author{Leon D.\ da Silva \and Marcelo P.\ Santos \and Md.\ Zubair}

\institute{Leon D.\ da Silva \at Departamento de Matem\'atica, Universidade
Federal Rural de Pernambuco, Recife, 52171-900, PE, Brazil \\
\email{leon.silva@ufrpe.br}
\and
Marcelo P.\ Santos \at Departamento de Matem\'atica, Universidade
Federal Rural de Pernambuco, Recife, 52171-900, PE, Brazil
\and
Md.\ Zubair \at Department of Computer Science and Engineering, Green
University of Bangladesh, Dhaka, Bangladesh}

\date{}

\begin{document}

\maketitle

\begin{abstract}
Filtering a unitary coupled-cluster ansatz by molecular point-group symmetry
has no settled variational cost for groups with degenerate irreducible
representations. We prove that the symmetry-adapted subvariety of the
amplitude space is a critical subvariety of the energy, for irreducible
representations of any dimension. The gradient along every removed direction
vanishes there, so a quasi-Newton optimisation with exact gradients started at
the reference determinant cannot distinguish the filtered from the unfiltered
ansatz. Every operator invariant
under the full point group already lies in the span of the pool retained by an
Abelian subgroup, so the filtered pool reaches at every geometry an energy no
higher than the fully invariant one reaches. A Hamiltonian-informed pool lies
inside the set the equivariant pool touches, and in a symmetry-adapted basis
inside the symmetry-filtered pool, which makes a published pair of parameter
counts a certificate for the orbital basis. Across six molecules and two basis sets,
the full-group filter changes the classical coupled-cluster correlation energy
by at most $3.4 \times 10^{-12}$ millihartree in the larger basis where the
point group is finite. The parameter count of the unitary ansatz falls from 75
to 30 for ammonia and from 65 to 21 for methane, measured against the Abelian
filter current practice uses. The freeness measured for the full-group filter
holds near equilibrium and degrades into the static-correlation regime.
\keywords{point-group symmetry \and coupled-cluster theory \and
unitary coupled cluster \and equivariance \and variational quantum
eigensolver \and non-Abelian groups}
\subclass{81V55 \and 20C35 \and 81Q70 \and 65K10}
\end{abstract}

\section{Introduction}
\label{S:intro}
A unitary coupled-cluster ansatz is a product of exponentials of excitation
operators, and filtering it by molecular point-group symmetry deletes from
that product every operator whose amplitude the symmetry forces to vanish. For
Abelian groups the filter is standard practice and its cost is understood. The
classical coupled-cluster literature settled the degenerate case decades ago.
\v{C}\'arsky et al.~\cite{carsky1987} proved the selection rule without
assuming one-dimensional representations, H\"aser et al.~\cite{haser1991}
wrote the group-average projector that implements it, and Stanton et
al.~\cite{stanton1991} built the implementation that became standard, while
stating that its algorithms are appropriate only for Abelian groups.
Remark~\ref{R:priority} delimits what each of those papers established, and none
of it is claimed here. What has no counterpart on the unitary side is a
statement of what the filter costs variationally when the group has
irreducible representations of dimension greater than one.

The variational quantum eigensolver \cite{peruzzo2014}, run with the unitary
coupled-cluster singles-and-doubles ansatz \cite{romero2018}, inherited the
Abelian scheme and with it the Abelian ceiling. The non-Abelian case is called
open in six places in the recent literature. Cao et al.~\cite{cao2022} exclude
it by explicit choice when projecting the amplitude space onto the totally
symmetric subspace of an Abelian subgroup. Goings et al.~\cite{goings2023}
expect savings from groups such as the full $D_{6h}$ without exploring them,
and Sakuma et al.~\cite{sakuma2026} leave the non-Abelian hardware application
as future work. Picozzi and Tennyson name ammonia, whose group $C_{3v}$ is not
Boolean, as the case where the $C_s$ subgroup will be used instead
\cite{picozzi2023}, and three years later still descend to the largest Boolean
subgroup, across nine reference molecules that all lie in $C_{2v}$, $D_{2h}$
or $C_{2h}$ \cite{picozzi2026}. He et al.~\cite{he2026} rest their non-Abelian
conclusions on numerical results alone, a hedge Section~\ref{S:certificates}
takes up in the words they state it. Greiner et al.~\cite{greiner2024}
meanwhile exercise the full $D_{6h}$ on benzene, so the barrier was never
mathematical.

Two of those papers report the same quantity for the same molecules under the
same Abelian subgroup, and the values disagree. Cao et al.\ put the difference
between the filtered and unfiltered UCCSD energies for ammonia and methane
below 0.005~mHartree, inside their own convergence threshold. He et al.\
report, for those systems and that subgroup, errors of 27.8 and
40.8~mHartree. The ratio is between five and eight thousand, and neither paper
records the conflict. A second tension is structural. Shkolnikov et
al.~\cite{shkolnikov2023} prove that a Pauli string changing the symmetry
sector has identically vanishing gradient at every step of an adaptive
algorithm and read that vanishing as a roadblock, while Cao et al.\ read the
same vanishing as an economy, and both readings follow from sector
orthogonality. No statement of variational freeness valid for irreducible
representations of dimension greater than one has been given, and no
measurement of ansatz-parameter compression under the full point group has
been reported.

We prove three statements about that amplitude space.
Theorem~\ref{T:critical} identifies the $G$-symmetric subvariety as a critical
subvariety of the energy, through the sector orthogonality that Shkolnikov et
al.~\cite{shkolnikov2023} identified for pools of Pauli strings. The gradient
along every removed direction vanishes at each point of that subvariety, and
not only at the reference determinant, for irreducible representations of any
dimension. Corollary~\ref{C:protocol} draws the operational consequence. A
quasi-Newton optimisation with exact gradients initialised at the reference
never leaves the subvariety, so the filtered and unfiltered ans\"atze follow
one trajectory and return one energy.

Theorem~\ref{T:span} places every operator invariant under the full group in
the span of the Abelian-filtered pool, so that
$\Span\mathcal{P}_G\subseteq\Span\mathcal{P}_{\mathrm{Sym}}$. That containment
runs opposite to the motivation He et al.~\cite{he2026} give for their
construction. The retained pool is the smaller one, and
Corollary~\ref{C:reach} makes it sufficient against the fully invariant pool
at every geometry, since the states its generators reach include every state
the invariant generators reach. Proposition~\ref{P:hisupport} places the
Hamiltonian-informed pool inside the support of the equivariant pool, the set
of monomials a $G$-equivariant operator can touch, and Theorem~\ref{T:span}
carries that support into the symmetry-filtered pool. In a symmetry-adapted
basis the Hamiltonian-informed pool is therefore contained in the
symmetry-filtered one (Corollary~\ref{T:hiuccsd}) and can never carry more
parameters (Corollary~\ref{C:certificate}). Remark~\ref{R:a1content} places the
inference this version withdraws in the gap between that support and the larger
set an $A_1$ shell label admits.

Each result reaches a working calculation. Theorem~\ref{T:span} and the
measurements of Section~\ref{S:numerics} together let a chemist filter
coupled-cluster amplitudes by the full point group and still recover the
unconstrained correlation energy to within \num{3.4e-12}~mHa at 6-31G,
for the molecules here whose point group is finite.
Corollary~\ref{C:certificate} turns a published pair of parameter counts into a
test of the orbital basis behind them. Section~\ref{S:numerics} locates the
geometry at which the freeness of the full-group filter ends. The mathematics
used here is standard. The
group-average projector and the Wigner--Eckart selection rule are classical,
and what is new is the object they act on, namely the amplitude space of a
variational product ansatz together with its optimisation trajectory.

The argument runs in one direction. Section~\ref{S:setup} builds the
representation theory both filters need and Section~\ref{S:abelian} settles the
Abelian case, marking the point at which its argument stops; that point is
where Theorem~\ref{T:critical} takes over (Section~\ref{S:critical}), and
Theorem~\ref{T:span} then converts the operators
Section~\ref{S:nonabelian} counts as removed into operators whose removal costs
nothing. The two certificates of Section~\ref{S:certificates} and the
measurements of Section~\ref{S:numerics} test that conversion against the
published record and locate the geometry where it ends, and
Section~\ref{S:scope} states the hypothesis that bounds all of it.
We assume familiarity with second quantisation and with character tables,
and no prior exposure to variational quantum algorithms.

\section{Symmetry-adapted orbitals, the excitation pool, and two filters}
\label{S:setup}

This section fixes the notation that gives every orbital an
irreducible-repre\-sentation label, defines the two filters that the rest of the
paper compares, and states the three hypotheses the later theorems assume. The
electronic-structure setting is standard \cite{mcardle2020}, and we develop the
representation theory here rather than cite it, because a reader who takes the
labels on trust cannot see where the two filters part company.

\begin{defin}[Adapted orbitals and the group action]
\label{D:orbitals}
Let $G$ be a finite molecular point group acting on the one-particle space.
Each spatial orbital $\phi_p$ carries a composite label $p=(\lambda,i,\mu)$,
where $\lambda$ names an irreducible representation $\rho_\lambda$ of $G$ of
dimension $d_\lambda$, the index $i$ names the shell, and
$\mu\in\{1,\dots,d_\lambda\}$ names the component. Writing $\hat a^\dagger_p$
and $\hat a_p$ for the associated creation and annihilation operators, the
action of $g\in G$ on the Fock space is the adjoint action of a unitary
$\hat R(g)$,
\begin{equation}\label{E:action}
  \hat R(g)\,\hat a^\dagger_{\lambda,i,\mu}\,\hat R(g)^{-1}
  = \sum_{\nu=1}^{d_\lambda} D^{(\lambda)}_{\nu\mu}(g)\,
    \hat a^\dagger_{\lambda,i,\nu},
\end{equation}
where $D^{(\lambda)}(g)$ is the matrix of $g$ in the basis of
$\rho_\lambda$. We use $i,j$ for occupied orbitals, $a,b$ for virtual
orbitals, and $p,q,r,s$ for general indices, and write
$N_{\mathrm{occ}}(\lambda)$ and $N_{\mathrm{vir}}(\lambda)$ for the numbers
of occupied and virtual shells transforming under $\rho_\lambda$.
\end{defin}

The composite label makes degeneracy visible, because \eqref{E:action} moves
the components of a shell among themselves while leaving $\lambda$ and $i$
fixed. A code that carries only Abelian irrep labels has no $\mu$ to record and
reads the $d_\lambda$ components as inequivalent orbitals, which is the entire
source of the phenomenon studied in Section~\ref{S:nonabelian}. Standard electronic-structure codes work in the
largest Abelian subgroup they support, which leaves a single reflection plane
for ammonia in $C_{3v}$ and three orthogonal twofold axes for methane in
$T_d$.

\begin{defin}[Adapted characters and excitation monomials]
\label{D:characters}
Let $N\subseteq G$ be a maximal Abelian subgroup whose characters are real,
and let the orbitals be adapted to $N$, so that
\begin{equation}\label{E:char}
  U(n)\,\hat a^\dagger_p\,U(n)^\dagger = \chi_p(n)\,\hat a^\dagger_p,
  \qquad \chi_p(n)\in\{\pm1\},\quad n\in N .
\end{equation}
For an excitation monomial $\tau_k$, a product of creation and annihilation
operators on adapted orbitals, set $\chi_k(n)=\prod\chi(n)$ over all orbital
indices of $\tau_k$. Because the characters are real, $\tau_k^\dagger$
carries the same character, so the anti-Hermitian generator
$\kappa_k:=\tau_k-\tau_k^\dagger$ satisfies
$U(n)\,\kappa_k\,U(n)^\dagger=\chi_k(n)\,\kappa_k$.
\end{defin}

\begin{defin}[Hamiltonian]
\label{D:hamiltonian}
The electronic Hamiltonian in second quantisation is
\begin{equation}\label{E:H}
  \mathcal{H}=\sum_{pq}h^p_q\,\hat a^\dagger_p\hat a_q
  +\tfrac12\sum_{pqrs}h^{pq}_{rs}\,
   \hat a^\dagger_p\hat a^\dagger_q\hat a_s\hat a_r ,
\end{equation}
with $h^p_q$ and $h^{pq}_{rs}$ the one- and two-electron integrals in the
adapted basis.
\end{defin}

\begin{defin}[Product ansatz and parameter partition]
\label{D:ansatz}
Let $\mathcal{P}_{\mathrm{UCCSD}}$ be the pool of all particle-number
preserving one- and two-body excitation operators. The trial state is the
product
\begin{equation}\label{E:ansatz}
  \ket{\psi(\theta)}=U(\theta)\ket{\Phi_0},
  \qquad
  U(\theta)=\prod_{k=1}^{L}\exp\bigl(\theta_{p(k)}c_k\kappa_k\bigr),
\end{equation}
where $p(k)$ assigns generators to shared parameters. We assume the tying is
\emph{parity consistent}: $\chi_k=\chi_{k'}$ whenever $p(k)=p(k')$. The
energy is $E(\theta)=\bra{\Phi_0}U(\theta)^\dagger\mathcal{H}U(\theta)
\ket{\Phi_0}$. Partition the parameters as
\begin{equation}\label{E:partition}
  S=\{m:\chi^{(m)}\equiv1 \text{ on } N\},
  \qquad A=S^{\mathsf c},
  \qquad V:=\{\theta:\theta_A=0\}.
\end{equation}
\end{defin}

Setting $\theta_A=0$ turns every factor carrying a parameter in $A$ into the
identity. The filtered ansatz is not a second construction with an energy
function of its own, but the ansatz of \eqref{E:ansatz} restricted to the
linear subvariety $V$ of \eqref{E:partition}. Every comparison made below is
therefore a comparison of the single function $E$, evaluated on $V$ and on the
whole domain.

\begin{defin}[The two filters]
\label{D:filters}
The \emph{$G$-equivariant} pool is
$\mathcal{P}_G=\{T:\Ad(g)T=T \text{ for all } g\in G\}$, where
$\Ad(g)T=\hat R(g)\,T\,\hat R(g)^{-1}$. The \emph{Abelian} pool, which is
what symmetry-adapted unitary coupled cluster implements \cite{cao2022}, is
$\mathcal{P}_{\mathrm{Sym}}=\{\tau_k:\chi_k\equiv1 \text{ on } N\}$: a single
excitation $T_{ia}$ is retained when $\chi_a\chi_i^{-1}$ is the trivial
character of $N$, and a double $T_{ijab}$ when
$\chi_a\chi_b\chi_i^{-1}\chi_j^{-1}$ is.
\end{defin}

Three hypotheses carry every result below. The first asks that the Hamiltonian
commute with the group, the second that the reference determinant be invariant
under it, and the third that parameters be tied only within a fixed character
class. Hypothesis \ref{H:A2} holds for a closed-shell reference in adapted
orbitals, because each doubly occupied orbital contributes $\chi_p^2=1$ to the
character of the determinant. It is also the hypothesis that bounds the scope
of everything that follows, and Section~\ref{S:scope} records where it fails.

\begin{enumerate}[label=(A\arabic*),leftmargin=2.6em]
  \item\label{H:A1} $[\mathcal{H},U(g)]=0$ for every $g\in G$.
  \item\label{H:A2} $U(g)\ket{\Phi_0}=\ket{\Phi_0}$ for every $g\in G$.
  \item\label{H:A3} The tying in \eqref{E:ansatz} is parity consistent.
\end{enumerate}

Hypothesis \ref{H:A3} is a property of the standard singlet tying. Generators
that share a parameter there are built from the same spatial orbitals and carry
the same character, so the hypothesis constrains nothing in practice.

\section{Abelian groups: the two filters coincide}
\label{S:abelian}

When every irreducible representation of $G$ is one-dimensional, no orbital
shell has more than one component, and the two filters of
Definition~\ref{D:filters} select the same pool. We state the Abelian case
because it marks the boundary of the distinction the paper draws, not because
it contributes a result. Nothing in the non-Abelian discussion of
Section~\ref{S:nonabelian} can be inferred from experience with the Abelian
one.

\begin{thm}[Equivalence for Abelian groups]
\label{T:abelian}
Let $G$ be Abelian. Then
$\mathcal{P}_{\mathrm{Sym}}=\mathcal{P}_G$, and every anti-Hermitian
$G$-equivariant operator in the singles-and-doubles space is a linear
combination of elements of $\mathcal{P}_{\mathrm{Sym}}$.
\end{thm}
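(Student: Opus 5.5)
The plan is to read the statement as an identity between a set of monomials and a linear space: $\mathcal{P}_{\mathrm{Sym}}$ is exactly the set of excitation monomials (and their adjoints) that lie in $\mathcal{P}_G$, and $\mathcal{P}_G$ restricted to the singles-and-doubles space is their span. The whole argument rests on one observation. When $G$ is Abelian, every irreducible representation is one-dimensional, so every $d_\lambda=1$ and the component index $\mu$ in Definition~\ref{D:orbitals} is trivial. Moreover the maximal Abelian subgroup $N$ of Definition~\ref{D:characters} is $G$ itself. Consequently \eqref{E:action} reduces to \eqref{E:char} with $N=G$: each adapted creation operator is an eigenvector of $\Ad(g)$ with eigenvalue $\chi_p(g)$.

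First I will show that every monomial is a simultaneous eigenvector of the adjoint action. Because $\Ad(g)$ is an algebra automorphism, it acts on a product of creation and annihilation operators factor by factor. Each annihilation operator picks up $\overline{\chi_p(g)}$, which equals $\chi_p(g)$ since the characters are real. Hence $\Ad(g)\tau_k=\chi_k(g)\tau_k$, and the same holds for $\tau_k^\dagger$ and for $\kappa_k$, as already recorded in Definition~\ref{D:characters}. The inclusion of $\mathcal{P}_{\mathrm{Sym}}$ in $\mathcal{P}_G$ is then immediate: a monomial with $\chi_k\equiv1$ is fixed by every $\Ad(g)$.

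For the converse, and for the span statement, I will expand an arbitrary operator $T$ in the singles-and-doubles space in the basis $\{\tau_k,\tau_k^\dagger\}$. This basis is linearly independent because distinct normal-ordered monomials in adapted orbitals are. I will then group the terms by character. Since $\Ad(g)$ is diagonal in this basis, the condition $\Ad(g)T=T$ for all $g$ forces every coefficient attached to a monomial with $\chi_k(g)\neq1$ for some $g$ to vanish. A cleaner route is to apply the group-average projector $|G|^{-1}\sum_g\Ad(g)$. Orthogonality of characters shows that it sends $\tau_k$ to $\tau_k$ when $\chi_k$ is trivial and to $0$ otherwise. So $T$ is a combination of trivial-character monomials, that is, of elements of $\mathcal{P}_{\mathrm{Sym}}$ and their adjoints. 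In particular the only monomials in $\mathcal{P}_G$ are those of $\mathcal{P}_{\mathrm{Sym}}$, which gives the equality of the two pools. If $T$ is moreover anti-Hermitian, $T^\dagger=-T$ pairs the coefficient of $\tau_k$ with that of $\tau_k^\dagger$. This lets me rewrite $T$ as $\sum_k(\mathrm{Re}\,c_k\,\kappa_k+\mathrm{Im}\,c_k\,i(\tau_k+\tau_k^\dagger))$, a linear combination built from $\tau_k,\tau_k^\dagger$ with $\tau_k\in\mathcal{P}_{\mathrm{Sym}}$.

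I expect the main obstacle to be the interpretation, not the algebra. The statement writes $\mathcal{P}_{\mathrm{Sym}}=\mathcal{P}_G$ although one side is a set of monomials and the other a linear space. I will make explicit that the equality is meant at the level of the monomials each admits, with the span identity carrying the rest. A second point needs care. Definition~\ref{D:characters} assumes real characters, which for an Abelian point group means a Boolean group. For cyclic groups such as $C_3$ with complex characters, the adjoint $\tau_k^\dagger$ carries $\overline{\chi_k}$ instead. I would remark that the argument survives unchanged, because the trivial character is self-conjugate, so $\tau_k$ and $\tau_k^\dagger$ are retained or removed together.
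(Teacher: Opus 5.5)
Your proposal is correct and is essentially the paper's argument. With every irreducible representation one-dimensional and $N=G$, each monomial is an eigenvector of $\Ad(g)$ with eigenvalue $\chi_k(g)$, so the character condition defining $\mathcal{P}_{\mathrm{Sym}}$ is exactly the equivariance condition (the paper phrases this as the selection rule that $\rho(a)\otimes\rho(i)^*$ contain the trivial representation), and your basis-uniqueness step for the span claim, which the paper leaves implicit in ``the two pools are defined by the same condition,'' is the argument it gives for Theorem~\ref{T:span}, specialised to $N=G$.
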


\begin{proof}
Every irreducible representation of an Abelian group is one-dimensional, so
$d_\lambda=1$ for all $\lambda$ and each orbital shell carries a single
one-dimensional label. Taking $N=G$ in Definition~\ref{D:filters}, the
condition $\chi_a\chi_i^{-1}=\chi_0$ is the condition that
$\rho(a)\otimes\rho(i)^*$ contain the trivial representation, which is the
$G$-equivariant rule. The two pools are defined by the same condition.
\end{proof}

\begin{rem}[Priority]
\label{R:priority}
Theorem~\ref{T:abelian} is the unitary form of a statement that classical
coupled-cluster theory has carried since 1987. {\v C}{\'a}rsky et
al.~\cite{carsky1987} proved that the pseudo-integrals constituting the
double-excitation amplitudes retain every symmetry property of ordinary
integrals, and vanish unless the product of the four orbitals contains a
totally symmetric component. Their proof uses the full matrix elements of the
irreducible representations and never assumes one-dimensional irreps, and the
molecules they treat include NH$_3$ in $C_{3v}$. H{\"a}ser et
al.~\cite{haser1991} generalised the construction to an arbitrary discrete
point group by applying the totally symmetric projector
$\lvert G\rvert^{-1}\sum_R D(R)$ to quadruples of orbitals, and credited
\cite{carsky1987} with the non-Abelian extension. Stanton et
al.~\cite{stanton1991} turned the rule into the direct-product decomposition
that became standard, declared it appropriate only for Abelian groups, and
there recorded the generalisation following \cite{carsky1987} as known and
unimplemented. Cao et al.~\cite{cao2022} carried the Abelian
statement to the unitary ansatz, and observed there what does not transfer from
the classical setting, since the de-excitation operator blocks the natural
truncation of coupled cluster. Section~\ref{S:critical} therefore argues
variationally instead of by amplitude vanishing.
\end{rem}

\section{Equivariance and the critical subvariety}
\label{S:critical}

The mechanism of this section is an action of the symmetry group on the parameters
of the ansatz. Each element of $N$ flips the sign of every amplitude carrying a
non-trivial character and leaves the remaining amplitudes fixed. The energy is
invariant under that action, so its restriction to any flipped direction is an even
function of one variable, and an even function has vanishing derivative at the
origin. Lemma~\ref{L:equivariance}, Proposition~\ref{P:invariance}
and Theorem~\ref{T:critical} state the action, the invariance and the vanishing, in
that order.

\begin{lemma}[Equivariance of the ansatz]
\label{L:equivariance}
Assume \ref{H:A3}. For every $n\in N$,
\begin{equation}\label{E:equivariance}
  U(n)\,U(\theta)\,U(n)^\dagger=U(\sigma_n\theta),
  \qquad (\sigma_n\theta)_m=\chi^{(m)}(n)\,\theta_m .
\end{equation}
\end{lemma}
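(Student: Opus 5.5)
The plan is to prove \eqref{E:equivariance} factor by factor. Conjugation by the unitary $U(n)$ is an algebra automorphism of the operator algebra on Fock space, so it passes through products and through exponentials. Concretely, for any operator $X$,
\[
  U(n)\exp(X)\,U(n)^\dagger=\exp\bigl(U(n)XU(n)^\dagger\bigr),
\]
which follows from the power series and $U(n)^\dagger U(n)=1$. Applying this to each of the $L$ factors of \eqref{E:ansatz} and inserting $U(n)^\dagger U(n)$ between consecutive factors gives
\[
  U(n)\,U(\theta)\,U(n)^\dagger=\prod_{k=1}^{L}\exp\bigl(\theta_{p(k)}c_k\,U(n)\kappa_kU(n)^\dagger\bigr),
\]
with the ordering of the product unchanged.

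The second step evaluates each conjugated generator. Here I will use Definition~\ref{D:characters}. Relation \eqref{E:char} gives $U(n)\hat a^\dagger_pU(n)^\dagger=\chi_p(n)\hat a^\dagger_p$. Taking adjoints, and using that $\chi_p(n)=\pm1$ is real, gives $U(n)\hat a_pU(n)^\dagger=\chi_p(n)\hat a_p$. Since conjugation is multiplicative, a monomial $\tau_k$ therefore picks up the product of the characters of its orbital indices, which is $\chi_k(n)$. The same reality argument applies to $\tau_k^\dagger$, so $U(n)\kappa_kU(n)^\dagger=\chi_k(n)\kappa_k$, as Definition~\ref{D:characters} already records. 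Each factor thus becomes $\exp\bigl(\chi_k(n)\theta_{p(k)}c_k\kappa_k\bigr)$.

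The third step, where hypothesis \ref{H:A3} is needed, is to absorb the sign $\chi_k(n)$ into the parameter. Parity consistency says $\chi_k=\chi_{k'}$ whenever $p(k)=p(k')$. Hence the character $\chi^{(m)}:=\chi_k$ for any $k$ with $p(k)=m$ is well defined on parameters, and $\chi_k(n)\theta_{p(k)}=\chi^{(p(k))}(n)\theta_{p(k)}=(\sigma_n\theta)_{p(k)}$. The product then equals $U(\sigma_n\theta)$ factor by factor, which is \eqref{E:equivariance}.

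Without \ref{H:A3} the signs could differ within one tied parameter, and the right-hand side would not be of the form $U(\theta')$ for any $\theta'$. That makes this relabelling step the only genuinely nontrivial point of the proof. The remaining care concerns the definitions rather than the argument. One check is that $\chi^{(m)}$ in \eqref{E:partition} is indeed this common character. Another is that $U(n)$ in \eqref{E:char} is the same unitary as $\hat R(n)$ restricted to $N$, which I take as given.
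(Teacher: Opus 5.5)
Your proof is correct and follows the paper's argument. Conjugation by $U(n)$ is an algebra automorphism, so it passes through the product and through each exponential. Each generator then picks up the sign $\chi_k(n)$ from Definition~\ref{D:characters}, and \ref{H:A3} makes that sign depend only on $p(k)$, so it can be absorbed into the shared parameter.
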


\begin{proof}
Conjugation by a unitary is an algebra automorphism, so it distributes over
the product in \eqref{E:ansatz} and commutes with the exponential. Factor by
factor,
\[
  U(n)\,e^{\theta c_k\kappa_k}\,U(n)^\dagger
  = e^{\theta c_k\,U(n)\kappa_kU(n)^\dagger}
  = e^{\theta c_k\chi_k(n)\kappa_k},
\]
using Definition~\ref{D:characters} in the last step. By \ref{H:A3} the
character is constant on each parameter class, so the sign
$\chi_k(n)$ depends only on $p(k)$ and can be absorbed into the parameter.
\end{proof}

\begin{prop}[Invariance of the energy]
\label{P:invariance}
Assume \ref{H:A1}--\ref{H:A3}. Then $E(\sigma_n\theta)=E(\theta)$ for every
$n\in N$ and every $\theta$.
\end{prop}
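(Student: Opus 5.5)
The plan is to substitute Lemma~\ref{L:equivariance} into the definition of $E$ and then remove the conjugating unitaries: \ref{H:A2} absorbs them into the reference state and \ref{H:A1} into the Hamiltonian. Fix $n\in N$ and $\theta$. Lemma~\ref{L:equivariance}, which uses \ref{H:A3}, gives $U(\sigma_n\theta)=U(n)\,U(\theta)\,U(n)^\dagger$. Substituting this into the energy yields
\[
  E(\sigma_n\theta)=\bra{\Phi_0}U(n)\,U(\theta)^\dagger U(n)^\dagger\,\mathcal{H}\,U(n)\,U(\theta)\,U(n)^\dagger\ket{\Phi_0}.
\]

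The steps then run in a fixed order. First, since $N\subseteq G$, hypothesis \ref{H:A2} gives $U(n)^\dagger\ket{\Phi_0}=\ket{\Phi_0}$: apply $U(n)^\dagger$ to both sides of $U(n)\ket{\Phi_0}=\ket{\Phi_0}$. Taking adjoints gives $\bra{\Phi_0}U(n)=\bra{\Phi_0}$, so the two outer factors disappear. Second, \ref{H:A1} gives $U(n)^\dagger\mathcal{H}U(n)=\mathcal{H}$, so the inner conjugation disappears as well. What remains is $\bra{\Phi_0}U(\theta)^\dagger\mathcal{H}U(\theta)\ket{\Phi_0}=E(\theta)$.

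The one point that needs care is notational rather than mathematical. The hypotheses and the lemma are written with $U(g)$, while Definition~\ref{D:orbitals} writes the action as $\hat R(g)$ and Definition~\ref{D:characters} as $U(n)$. I will state once that these denote the same Fock-space representation restricted to $N$. This identification is also what makes the character relation $U(n)\kappa_kU(n)^\dagger=\chi_k(n)\kappa_k$ used in the lemma consistent with the representation appearing in \ref{H:A1} and \ref{H:A2}.

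I will also note a sign choice. Lemma~\ref{L:equivariance} is stated with $U(n)\cdot U(n)^\dagger$, whereas the adjoint action $\Ad(g)$ uses $\hat R(g)\cdot\hat R(g)^{-1}$. Since \ref{H:A1} and \ref{H:A2} hold for every group element, and in particular for $n^{-1}$, the argument is insensitive to which convention is used. Nothing else is needed, and I expect no real obstacle beyond this bookkeeping.
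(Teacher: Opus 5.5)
Your proof is correct and follows the paper's own argument. You substitute Lemma~\ref{L:equivariance}, remove the inner conjugation $U(n)^\dagger\mathcal{H}U(n)$ with \ref{H:A1}, and absorb the outer factors into $\ket{\Phi_0}$ with \ref{H:A2}; the paper does the same, applying the two hypotheses in the opposite order, which makes no difference.
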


\begin{proof}
By Lemma~\ref{L:equivariance}, $U(\sigma_n\theta)=U(n)U(\theta)U(n)^\dagger$,
so that
\begin{align*}
  U(\sigma_n\theta)^\dagger\,\mathcal{H}\,U(\sigma_n\theta)
  &= U(n)\,U(\theta)^\dagger\,
    \underbrace{U(n)^\dagger\mathcal{H}U(n)}_{=\,\mathcal{H}\text{ by }
    \ref{H:A1}}\,U(\theta)\,U(n)^\dagger \\
  &= U(n)\,U(\theta)^\dagger\mathcal{H}U(\theta)\,U(n)^\dagger .
\end{align*}
Taking the expectation value in $\ket{\Phi_0}$ and using \ref{H:A2} on both
sides gives the claim.
\end{proof}

\begin{thm}[The symmetric subvariety is critical]
\label{T:critical}
Assume \ref{H:A1}--\ref{H:A3}. For every $\theta_S$ and every $m\in A$,
\begin{equation}\label{E:critical}
  \left.\frac{\partial E}{\partial\theta_m}\right|_{(\theta_S,0)}=0 .
\end{equation}
The subvariety $V$ of \eqref{E:partition} is therefore a critical
subvariety of $E$, not merely a set containing critical points.
\end{thm}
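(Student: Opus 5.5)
The plan is to use Proposition~\ref{P:invariance} with one carefully chosen group element per removed direction, so that $E$ becomes even in that coordinate. Fix $\theta_S$ and a parameter $m\in A$. Since $m\notin S$, the character $\chi^{(m)}$ is not identically $1$ on $N$. By \ref{H:A3} this character is well defined on the parameter class, and by Definition~\ref{D:characters} it takes values in $\{\pm1\}$. So there is some $n\in N$ with $\chi^{(m)}(n)=-1$. Fix such an $n$.

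Next, consider the one-variable function $f(t)=E(\theta_S,t\,e_m)$, where $e_m$ is the $m$-th coordinate vector in the $A$-block and all other parameters in $A$ are zero. The action $\sigma_n$ of Lemma~\ref{L:equivariance} has the following effect on the point $(\theta_S,t\,e_m)$:
\begin{itemize}
\item each component $\theta_{m'}$ with $m'\in S$ is multiplied by $\chi^{(m')}(n)=1$, so $\theta_S$ is unchanged;
\item the components in $A$ other than $m$ are zero, so they stay zero whatever their sign;
\item the $m$-th component becomes $-t$.
\end{itemize}
Hence $\sigma_n(\theta_S,t\,e_m)=(\theta_S,-t\,e_m)$, and Proposition~\ref{P:invariance} gives $f(-t)=f(t)$ for all $t$.

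Then $E$ is smooth, being a matrix element of a finite product of exponentials of finite-dimensional operators in $\theta$. Therefore $f$ is differentiable, and $f'(0)=-f'(0)$, so $f'(0)=0$. Since $f'(0)$ is exactly $\partial E/\partial\theta_m$ at $(\theta_S,0)$, this proves \eqref{E:critical}.

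An equivalent and perhaps cleaner formulation differentiates the identity $E(\sigma_n\theta)=E(\theta)$ directly. Because $\sigma_n$ is a diagonal linear map, the chain rule gives
\[
\chi^{(m)}(n)\,\partial_mE(\sigma_n\theta)=\partial_mE(\theta).
\]
At $\theta=(\theta_S,0)$ we have $\sigma_n\theta=\theta$, which forces $-\partial_mE=\partial_mE$. The only point that needs care is this fixed-point property: the whole subvariety $V$ must be pointwise fixed by every $\sigma_n$. That holds precisely because the $S$-coordinates carry the trivial character.

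For the final sentence of the statement, the $S$-directions are tangent to $V$, and the $A$-directions span a complement. So the vanishing of all $A$-partials at every point of $V$ says that $\nabla E$ is tangent to $V$ along all of $V$. In other words, $V$ is critical in the transverse sense, for every $\theta_S$ and not only at $\theta_S=0$.

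I expect no real obstacle here. The substantive steps are making sure the sign flip is legitimate for tied parameters, which \ref{H:A3} together with Lemma~\ref{L:equivariance} already guarantee, and checking that the chosen $n$ leaves $\theta_S$ fixed.
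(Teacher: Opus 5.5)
Your proof is correct and follows the paper's argument: choose $n\in N$ with $\chi^{(m)}(n)=-1$, check that $\sigma_n$ sends $(\theta_S,t\,e_m)$ to $(\theta_S,-t\,e_m)$, and use Proposition~\ref{P:invariance} to make the restriction even, so its derivative at $t=0$ vanishes. The smoothness remark and the chain-rule variant are sound points the paper leaves implicit, and your reading of ``critical subvariety'' as $\nabla E$ tangent to $V$ along all of $V$ is the sense in which Corollary~\ref{C:protocol} uses the theorem.
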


\begin{proof}
The argument is a parity argument, and the point is that it runs at an
arbitrary point of $V$ rather than only at the reference. Since $m\in A$,
there is $n\in N$ with $\chi^{(m)}(n)=-1$. Put
$F=\{m':\chi^{(m')}(n)=-1\}$, so that $m\in F\subseteq A$, and define
$\psi(t)=E(\theta_S,\,t\,e_m)$. At the point $(\theta_S,t\,e_m)$ every
coordinate in $A$ other than $m$ vanishes, and the coordinates in $S$ are
fixed by $\sigma_n$; hence
$\sigma_n(\theta_S,t\,e_m)=(\theta_S,-t\,e_m)$. Proposition~\ref{P:invariance}
now gives $\psi(t)=\psi(-t)$, so $\psi$ is even and $\psi'(0)=0$.
\end{proof}

\begin{rem}
\label{R:critical}
The vanishing in \eqref{E:critical} holds at every point of $V$, with the reference
determinant one point among them, so the theorem constrains the whole optimisation
trajectory rather than the first step. The hypotheses that carry it are the product
form of \eqref{E:ansatz}, parity-consistent tying of the amplitudes, and a reference
invariant under $G$. The theorem says nothing about whether the two infima agree, and
Proposition~\ref{P:enclosure} addresses that separate question instead. Equality can
fail, for the same reason that an unrestricted Hartree-Fock solution can lie below the
restricted one. The mechanism at work is sector orthogonality, which
\cite{shkolnikov2023} use to prove that gradients vanish along the whole trajectory of
an adaptive algorithm, and which equivariant quantum circuits build in by construction
\cite{nguyen2024}. Theorem~\ref{T:critical} extends that argument in two directions,
because \cite{shkolnikov2023} fix the reference point and assume that every generator
transforms as a one-dimensional irreducible representation, while the statement here
requires neither.
\end{rem}

\begin{cor}[Initialisation at the reference does not distinguish the two
ans\"atze]
\label{C:protocol}
Assume \ref{H:A1}--\ref{H:A3}. Consider a quasi-Newton optimisation with
exact gradients, initial approximate Hessian $B_0=I$ and $\theta^{(0)}=0$,
whose step lengths are produced by a line search and whose secant update is
skipped at any step where the curvature condition $y^{\top}s>0$ fails. Its
iterates remain in $V$ at every step, and the unrestricted ansatz and the
ansatz restricted to $S$ therefore follow the same trajectory and return the
same energy.
\end{cor}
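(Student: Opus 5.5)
The argument will be an induction on the iteration count of BFGS, with the invariant that every iterate and every secant pair respects the splitting $\mathbb{R}^{S}\oplus\mathbb{R}^{A}$. The invariant is:

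\begin{itemize}
\item[(i)] $\theta^{(k)}\in V$;
\item[(ii)] $\nabla E(\theta^{(k)})$ has vanishing $A$-block;
\item[(iii)] the current approximate Hessian $B_k$ is block diagonal with respect to $S$ and $A$, and its $A$-block equals the identity.
\end{itemize}

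Condition (iii) is slightly stronger than needed but easy to maintain.

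For the base case, $\theta^{(0)}=0\in V$ and $B_0=I$ is block diagonal. Theorem~\ref{T:critical}, applied at the point $(0,0)$, gives the vanishing $A$-block of the gradient.

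For the inductive step, suppose (i)--(iii) hold at step $k$. The search direction is $d_k=-B_k^{-1}\nabla E(\theta^{(k)})$. Since $B_k^{-1}$ is block diagonal and the $A$-block of the gradient vanishes, $d_k$ has zero $A$-block. Every trial point of the line search, $\theta^{(k)}+\alpha d_k$, therefore lies in $V$, whatever the line search does. In particular $\theta^{(k+1)}\in V$, and Theorem~\ref{T:critical}, applied now at the general point $(\theta^{(k+1)}_S,0)$ of $V$ rather than only at the reference, gives (ii) at step $k+1$.

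Next consider the secant pair $s=\theta^{(k+1)}-\theta^{(k)}$ and $y=\nabla E(\theta^{(k+1)})-\nabla E(\theta^{(k)})$. Both vectors are supported in $S$. If the curvature condition fails, the update is skipped and $B_{k+1}=B_k$, so (iii) persists trivially. Otherwise the BFGS update is
\[
B_{k+1}=B_k-\frac{B_k s s^{\top}B_k}{s^{\top}B_k s}+\frac{y y^{\top}}{y^{\top}s}.
\]
Both correction terms are outer products of vectors supported in $S$, because $B_k s$ stays in $S$ by block diagonality. The $A$-block and the off-diagonal blocks are therefore unchanged, and (iii) holds at step $k+1$.

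It remains to compare this trajectory with the restricted optimisation. The restricted problem is BFGS on $\theta_S\mapsto E(\theta_S,0)$, started at $0$ with the identity. Along the unrestricted trajectory, the $S$-block of $B_k$, the $S$-gradient, the line-search function $\alpha\mapsto E(\theta^{(k)}+\alpha d_k)$ and the curvature test $y^{\top}s>0$ all coincide with their restricted counterparts. A second induction on $k$, run alongside the first, then shows the iterates agree componentwise. The two runs therefore return the same energy.

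The main obstacle is making precise that the line search and the skip rule are determined by the same data in both runs. One must assume the line search is a deterministic rule depending only on the one-dimensional function $\alpha\mapsto E(\theta^{(k)}+\alpha d_k)$ and its derivative, for instance Wolfe conditions with a fixed bracketing scheme. I would state this as an explicit reading of ``line search'' in the hypothesis. The derivative along $d_k$ equals the $S$-directional derivative because $d_k$ has no $A$-block, and that equality is what makes the two line searches identical.
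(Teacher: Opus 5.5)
Your proof is correct and follows the paper's induction. Your invariant (iii) is the paper's condition that $B_k$ preserves $V$ and acts as the identity on $V^\perp$, Theorem~\ref{T:critical} is applied at each iterate in the same way, and the BFGS corrections are confined to the $S\times S$ block by the same observation. The paper also carries positive definiteness of $B_k$ through the induction, and you should add it, since it is what guarantees that $B_k^{-1}$ exists and that $s^{\top}B_k s>0$ whenever the update is applied. Your second induction, with the line search read as a deterministic rule on the one-dimensional restriction, makes explicit the same-trajectory claim that the paper's proof leaves implicit.
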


\begin{proof}
We induct on the iteration index, carrying three properties: the iterate
satisfies $\theta^{(j)}\in V$; the approximate Hessian $B_j$ is positive
definite; and $B_j$ preserves $V$ and acts as the identity on $V^\perp$. The
base case is $\theta^{(0)}=0\in V$ with $B_0=I$. For the step,
Theorem~\ref{T:critical} gives $\nabla E(\theta^{(j)})\in V$, so the search
direction $d_j=-B_j^{-1}\nabla E(\theta^{(j)})$ lies in $V$. Since $V$ is a
linear subspace, $\theta^{(j+1)}=\theta^{(j)}+\alpha_j d_j$ lies in $V$ for
every step length $\alpha_j$, and the conclusion about the iterates is
therefore independent of which line search produced that step. Both
$s=\theta^{(j+1)}-\theta^{(j)}$ and
$y=\nabla E(\theta^{(j+1)})-\nabla E(\theta^{(j)})$ then lie in $V$. If $s=0$
the iteration has terminated. Otherwise, when $y^{\top}s>0$, positive
definiteness of $B_j$ gives $s^{\top}B_j\,s>0$, so both denominators of the
BFGS update
\[
  B_{j+1}=B_j+\frac{y\,y^{\top}}{y^{\top}s}
  -\frac{B_j\,s\,s^{\top}B_j}{s^{\top}B_j\,s}
\]
are non-zero, the update is well defined and preserves positive definiteness
by the standard argument, and it adds only operators supported on
$V\otimes V$, which preserves the hypothesis on $B$. When $y^{\top}s\le0$ the
update is skipped and $B_{j+1}=B_j$ carries the hypothesis unchanged.
\end{proof}

Corollary~\ref{C:protocol} constrains how a comparison between the two ans\"atze can
be reported. An experiment that initialises every parameter at zero and optimises
with a quasi-Newton method cannot separate them, whatever quantity it measures. A
separation reported under that protocol is evidence about the implementation and not
about the ansatz, and Section~\ref{S:certificates} identifies which part of the
implementation.

The measurements of Section~\ref{S:numerics} use the limited-memory variant of the
same method, with analytic gradients and no bounds in force. Its two-loop recursion
builds each search direction from the stored pairs $(s_i,y_i)$, which lie in $V$ by
the same induction, so that direction lies in $V$ as well and the corollary covers
those runs. The curvature test governing which pairs the variant stores is the one
assumed above.

\section{Non-Abelian groups: what the Abelian filter removes}
\label{S:nonabelian}

For a non-Abelian group the Abelian filter keeps fewer operators than the full group
would admit, and the natural expectation is that it also reaches fewer states. This
section separates the two questions, and the expectation fails. The number of
operators falls, by the amount Proposition~\ref{P:deficit} computes, while the span of
the fully invariant operators is untouched, which is Theorem~\ref{T:span}.

\begin{lemma}[Splitting of a faithful irrep under Abelian restriction]
\label{L:splitting}
Let $G$ be non-Abelian with a \emph{faithful} irreducible representation
$\rho_\lambda$ of dimension $d_\lambda>1$, and let $N\leq G$ be the maximal
Abelian subgroup used by the filter. Then
\begin{equation}\label{E:splitting}
  \rho_\lambda\big|_N=\sigma_{\lambda,1}\oplus\cdots\oplus
  \sigma_{\lambda,d_\lambda},
\end{equation}
and at least two of the summands are distinct.
\end{lemma}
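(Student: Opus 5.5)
The plan has two parts: first establish the decomposition \eqref{E:splitting}, then show its summands cannot all coincide.

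\textbf{The decomposition.} Since $N$ is Abelian, every irreducible representation of $N$ is one-dimensional. Complete reducibility over $\mathbb{C}$ (Maschke's theorem, $N$ being finite) then gives a decomposition of $\rho_\lambda|_N$ into $d_\lambda$ one-dimensional characters $\sigma_{\lambda,1},\dots,\sigma_{\lambda,d_\lambda}$. Concretely, the commuting unitary matrices $D^{(\lambda)}(n)$, $n\in N$, are simultaneously diagonalisable, and the diagonal entries give the $\sigma_{\lambda,\mu}$. This yields \eqref{E:splitting}. When the characters of $N$ are real, as Definition~\ref{D:characters} assumes, each $\sigma_{\lambda,\mu}$ takes values in $\{\pm1\}$, but that is not needed for the argument.

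\textbf{At least two summands are distinct.} I will argue by contradiction. Suppose $\sigma_{\lambda,1}=\cdots=\sigma_{\lambda,d_\lambda}=\sigma$. Then $\rho_\lambda(n)=\sigma(n)\,I_{d_\lambda}$ for every $n\in N$, so $\rho_\lambda(N)$ consists of scalar matrices and is central in $\rho_\lambda(G)$. By faithfulness, $\rho_\lambda$ is injective, so $N$ is contained in the centre $Z(G)$. Now use maximality of $N$. For any $g\in G$, the subgroup $\langle N,g\rangle$ is Abelian, because $g$ commutes with $N\subseteq Z(G)$ and with itself, and $N$ is Abelian. Maximality then forces $g\in N$. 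Hence $G=N$ is Abelian, contradicting the hypothesis.

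An alternative route would be to observe that scalar action makes $\rho_\lambda(N)$ central, and then that $G=N$ makes $\rho_\lambda$ one-dimensional by Schur's lemma, contradicting $d_\lambda>1$. Either route works, but the first is cleaner.

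\textbf{Main obstacle.} The delicate step is the meaning of ``maximal Abelian subgroup'' in the statement. The argument needs maximality under inclusion among Abelian subgroups of $G$. If $N$ is instead the largest subgroup with real characters, such as the Boolean $C_2^k$ groups codes use, maximality only holds within that restricted class, and $\langle N,g\rangle$ need not belong to it. To cover that reading, I would first try a direct argument. Since $N\subseteq Z(G)$ and $N$ is a nontrivial elementary Abelian $2$-group whenever $G$ has even order, one can pick $g\notin Z(G)$ with $g^2\in N$, or an involution outside $N$. Then $\langle N,g\rangle$ is again Boolean, or at least Abelian, contradicting maximality.

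If that fails in general, I would restrict the statement to inclusion-maximal Abelian $N$ and check the molecular cases ($C_{3v}\supset C_s$, $T_d\supset D_2$) directly. There the reflection or the $C_2$ axes are not central, so faithfulness alone already rules out the scalar case.
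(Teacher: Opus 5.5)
Your proof is correct and is essentially the paper's: faithfulness turns a scalar restriction into $N\subseteq Z(G)$, and maximality of $N$ among all Abelian subgroups is then contradicted by adjoining a non-central element (the paper first records $N=Z(G)$, while you conclude $G=N$ directly). The paper adopts exactly this inclusion-maximal reading, so your fallback is unnecessary, and under the Boolean reading your even-order sketch would not go through in general anyway: in $Q_8$ the maximal Boolean subgroup $\{\pm1\}$ is central, and the faithful two-dimensional irrep restricts to it as a scalar.
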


\begin{proof}
The argument is by contradiction. Since $N$ is Abelian, the restriction
decomposes completely into one-dimensional representations
$\sigma_{\lambda,\mu}$ \cite[Ch.~2]{serre1977}, which is
\eqref{E:splitting}. Suppose all of them coincided, say
$\sigma_{\lambda,\mu}=\sigma$ for every $\mu$. Then
$\rho_\lambda(n)=\sigma(n)I_{d_\lambda}$ for every $n\in N$, so
$\rho_\lambda(n)$ commutes with $\rho_\lambda(g)$ for every $g\in G$.
Faithfulness carries that back to $G$ itself: $n$ commutes with every $g\in
G$, so $N\subseteq Z(G)$, and maximality of $N$ among Abelian subgroups
forces $N=Z(G)$. The centre of a non-Abelian group is never maximal Abelian,
because for any $g\notin Z(G)$ the subgroup generated by $Z(G)$ and $g$ is
Abelian and strictly larger. This contradicts the maximality of $N$.
\end{proof}

Faithfulness is a hypothesis here, not a formality. The two-dimensional irrep
$E$ of $T_d$ has kernel $D_2$, the very subgroup the filter uses, so it
restricts to the trivial character twice and its components stay
indistinguishable to the filter. Where Lemma~\ref{L:splitting} does apply,
two components of one degenerate shell receive different characters of $N$,
and orbitals that the full group cannot tell apart become distinguishable to
the filter. Both irreps that carry the molecules studied here are faithful.
The excitations that mix the components of a split shell carry a non-trivial
character and are discarded, so the pool loses the off-diagonal operators of
every such channel.
Proposition~\ref{P:deficit} counts them under a multiplicity-free hypothesis,
which is read off a character table.

\begin{prop}[What the filter removes]
\label{P:deficit}
Fix $\lambda$ with $d_\lambda>1$ and assume in addition that
$\rho_\lambda|_N$ is multiplicity free, that is, the characters
$\{\sigma_{\lambda,\mu}\}_{\mu=1}^{d_\lambda}$ of \eqref{E:splitting} are
pairwise distinct. Then for each pair of occupied and virtual shells
transforming under $\rho_\lambda$, the Abelian filter retains exactly the
$d_\lambda$ diagonal single excitations and discards
$d_\lambda(d_\lambda-1)$ of them. For doubles the criterion is equality of
character products rather than diagonality of the index tuple. The filter
retains the tuples with $\sigma_{\lambda,\nu_1}\sigma_{\lambda,\nu_2}
=\sigma_{\lambda,\mu_1}\sigma_{\lambda,\mu_2}$, and their number is, writing
$\widehat N$ for the character group of $N$,
\begin{equation}\label{E:doubles_retained}
  R_\lambda=\sum_{\chi\in\widehat N}m_\lambda(\chi)^2,
  \qquad
  m_\lambda(\chi)=\#\bigl\{(\nu,\nu'):
  \sigma_{\lambda,\nu}\sigma_{\lambda,\nu'}=\chi\bigr\},
\end{equation}
so that $d_\lambda^4-R_\lambda$ doubles are discarded. Summing over irreps and
shell pairs, the total number of single excitations removed is
\begin{equation}\label{E:total_deficit}
  \Delta_{\mathrm{singles}}
  =\sum_\lambda N_{\mathrm{occ}}(\lambda)\,N_{\mathrm{vir}}(\lambda)\,
   d_\lambda(d_\lambda-1).
\end{equation}
\end{prop}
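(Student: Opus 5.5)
The proof is a direct count. It uses the decomposition \eqref{E:splitting} and the retention rule of Definition~\ref{D:filters} applied to orbitals adapted to $N$. I take the adapted basis of each degenerate shell to be the one in which $\rho_\lambda|_N$ is diagonal. The component $\mu$ of shell $(\lambda,i)$ then carries the character $\sigma_{\lambda,\mu}$ of $N$, as in \eqref{E:char}. Since the characters of $N$ are real, $\chi^{-1}=\chi$ for every $\chi\in\widehat N$, so every retention condition is simply an equality of characters.

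\textbf{Singles.} Fix an occupied shell $(\lambda,i)$ and a virtual shell $(\lambda,a)$ transforming under the same $\rho_\lambda$. The excitation from component $\mu$ of $i$ to component $\nu$ of $a$ is retained exactly when $\sigma_{\lambda,\nu}\sigma_{\lambda,\mu}^{-1}$ is trivial, that is, when $\sigma_{\lambda,\nu}=\sigma_{\lambda,\mu}$. Under the multiplicity-free hypothesis this holds if and only if $\nu=\mu$. So of the $d_\lambda^2$ singles between the two shells, exactly the $d_\lambda$ diagonal ones survive and $d_\lambda(d_\lambda-1)$ are discarded. Summing over $\lambda$ and over the $N_{\mathrm{occ}}(\lambda)N_{\mathrm{vir}}(\lambda)$ shell pairs gives \eqref{E:total_deficit}. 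I restrict to same-$\lambda$ pairs and note the reason. A single between shells of different irreps is either removed by both filters or is not part of the degenerate-channel count. The proposition defines its deficit channel by shell pairs of the same $\lambda$, so I will state that convention explicitly rather than prove anything about the cross-irrep pairs.

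\textbf{Doubles.} Fix the shells carrying the two occupied and two virtual indices, all transforming under $\rho_\lambda$. A double with occupied components $(\mu_1,\mu_2)$ and virtual components $(\nu_1,\nu_2)$ is retained exactly when $\sigma_{\lambda,\nu_1}\sigma_{\lambda,\nu_2}=\sigma_{\lambda,\mu_1}\sigma_{\lambda,\mu_2}$. To count these, I group the $d_\lambda^2$ ordered occupied pairs by the value $\chi$ of their character product; by the definition in \eqref{E:doubles_retained}, the class of $\chi$ has size $m_\lambda(\chi)$. The virtual pairs are grouped the same way, with the same multiplicities, because both sides use the same list $\{\sigma_{\lambda,\mu}\}$. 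A retained tuple is a choice of one occupied pair and one virtual pair from the same class, so the number retained is $\sum_\chi m_\lambda(\chi)^2=R_\lambda$. The total number of ordered tuples is $d_\lambda^4$, which leaves $d_\lambda^4-R_\lambda$ discarded.

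\textbf{Main obstacle.} The arithmetic is routine; the delicate part is the bookkeeping conventions. First, the count treats index tuples as ordered and ignores antisymmetry and spin. I will state that $R_\lambda$ counts ordered component tuples for a fixed choice of shells, and that the identification $\tau_k$ versus $-\tau_k$ is not used. Second, the count needs a basis that diagonalises $N$ inside each degenerate shell. This is compatible with Definition~\ref{D:orbitals}, because the components $\mu$ can be chosen as weight vectors of $N$, and I will remark on that choice. The proof does not use Lemma~\ref{L:splitting}: the multiplicity-free hypothesis is strictly stronger than the distinctness that lemma guarantees.
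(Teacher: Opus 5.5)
Your proposal is correct and follows the paper's proof essentially step for step. For singles, the multiplicity-free hypothesis turns $\sigma_{\lambda,\nu}=\sigma_{\lambda,\mu}$ into $\nu=\mu$; for doubles, grouping the $d_\lambda^2$ ordered pairs by character product and matching classes gives $\sum_\chi m_\lambda(\chi)^2$. The paper adds one remark you omit, namely that tuples whose virtual pair equals the occupied pair or its swap already give $2d_\lambda^2-d_\lambda$ retained tuples, so the doubles criterion is strictly weaker than diagonality; this is commentary and not part of the count.
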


\begin{proof}
Write $\hat E^{(ia)}_{\nu\mu}=\hat a^\dagger_{a,\nu}\hat a_{i,\mu}$ for the
$d_\lambda^2$ single excitations between the two shells. By
Definition~\ref{D:filters} the filter retains $\hat E^{(ia)}_{\nu\mu}$ when
$\sigma_{\lambda,\nu}=\sigma_{\lambda,\mu}$ in $\widehat N$. Under the
multiplicity-free hypothesis this holds exactly when $\nu=\mu$, which leaves
$d_\lambda$ operators and removes $d_\lambda^2-d_\lambda$.

Applied to the four-fold tuple the same criterion requires the product of the
two virtual characters with the inverses of the two occupied ones to be
trivial on $N$, that is,
$\sigma_{\lambda,\nu_1}\sigma_{\lambda,\nu_2}
=\sigma_{\lambda,\mu_1}\sigma_{\lambda,\mu_2}$. Grouping the
$d_\lambda^2$ ordered pairs by that product produces the multiplicities
$m_\lambda(\chi)$, and a tuple is retained when its two pairs fall in the
same group, which is the count \eqref{E:doubles_retained}.

The doubles criterion is weaker than diagonality, and the gap is not an
accident of particular groups. A tuple whose two ordered pairs agree
satisfies the product condition, and so does the tuple obtained by swapping
$\mu_1$ with $\mu_2$; together these already give $2d_\lambda^2-d_\lambda$
retained tuples, against the $d_\lambda^2$ that a diagonal criterion would
allow. Any count of the form $d_\lambda^4-d_\lambda^2$ therefore overstates
what the filter removes whenever $d_\lambda>1$.
\end{proof}

\begin{table}[t]
\centering
\caption{Single and double excitations removed by the Abelian filter per pair
of occupied and virtual shells, for the two multidimensional irreps that
carry the molecules studied here. Both restrictions are multiplicity free, as
required by Proposition~\ref{P:deficit}, which is verified by inspection of
the character tables. The doubles column follows
from~\eqref{E:doubles_retained}, not from a diagonal criterion. The counts are
what the filter removes. They are not a variational cost, by
Theorem~\ref{T:span}.}
\label{Tab:deficit}
\begin{tabular}{llcccc}
\toprule
Group & Irrep & $d_\lambda$ & Abelian subgroup & Singles removed
& Doubles removed \\
\midrule
$C_{3v}$ & $E$   & 2 & $C_s$ & 2 of 4 & 8 of 16 \\
$T_d$    & $T_2$ & 3 & $D_2$ & 6 of 9 & 60 of 81 \\
\bottomrule
\end{tabular}
\end{table}

What Table~\ref{Tab:deficit} counts is discarded operators. That count has been read
as a variational deficit, and that reading motivates the construction of
\cite{he2026}. It does not follow. The discarded operators are exactly
the ones a full $G$-invariance filter would discard as well, so the containment
between the two pools runs in the opposite direction, which Theorem~\ref{T:span}
states.

\begin{thm}[Every fully invariant operator is retained]
\label{T:span}
Let $G$ be finite, let the orbitals be adapted to $N\subseteq G$, and let
$O$ be an operator on the singles-and-doubles excitation space that is
invariant under the full group $G$. Then
\begin{equation}\label{E:span}
  O\in\Span\{\tau_k:\chi_k\equiv1 \text{ on } N\},
\end{equation}
that is, $O$ lies in the span of the Abelian pool. Equivalently,
$\Span\mathcal{P}_G\subseteq\Span\mathcal{P}_{\mathrm{Sym}}$.
\end{thm}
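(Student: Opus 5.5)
The plan is to reduce $G$-invariance to $N$-invariance and then use the fact that the monomials $\tau_k$ form an eigenbasis for the $N$-action. Since $N\subseteq G$, an operator $O$ with $\Ad(g)O=O$ for all $g\in G$ satisfies in particular $\Ad(n)O=O$ for all $n\in N$. No information about degenerate irreps, faithfulness or multiplicity-freeness is needed; this is why the theorem holds for any finite $G$, in contrast with Lemma~\ref{L:splitting} and Proposition~\ref{P:deficit}.

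First, expand $O$ in the monomial basis of the singles-and-doubles space built on the $N$-adapted orbitals, $O=\sum_k c_k\tau_k$. I would take this space to include de-excitations, so that anti-Hermitian generators $\kappa_k$ are covered, and the unique expansion is guaranteed because the normal-ordered monomials are linearly independent. By \eqref{E:char} each creation and annihilation operator is an eigenvector of $\Ad(n)$. Since $\Ad(n)$ is an algebra automorphism, each monomial satisfies $\Ad(n)\tau_k=\chi_k(n)\tau_k$, with annihilators contributing $\chi_p(n)^{-1}=\chi_p(n)$ because the characters are real. Hence $\Ad(n)O=\sum_k\chi_k(n)c_k\tau_k$.

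Second, compare coefficients. $N$-invariance gives $c_k(\chi_k(n)-1)=0$ for every $k$ and every $n\in N$, so $c_k\neq0$ forces $\chi_k\equiv1$ on $N$. This is \eqref{E:span}. A cleaner equivalent formulation applies the group-average projector $P_N=|N|^{-1}\sum_{n}\Ad(n)$. It fixes $O$, and it sends $\tau_k$ to $\tau_k$ or to $0$ according to whether $\chi_k$ is trivial, by orthogonality of characters. The statement $\Span\mathcal{P}_G\subseteq\Span\mathcal{P}_{\mathrm{Sym}}$ follows by applying this to each $T\in\mathcal{P}_G$.

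The main obstacle is bookkeeping rather than substance. One must confirm that the monomial set is a genuine basis of the space in which $O$ lives, so that coefficient comparison is legitimate. This includes handling the case where $O$ is written in the $G$-adapted labels $(\lambda,i,\mu)$: a change of orbital basis is needed to reach $N$-adapted orbitals. When $N$-characters split each $\rho_\lambda$, the components $\mu$ can be chosen $N$-adapted, so the $G$-adapted basis is already $N$-adapted; when they do not split, any basis of the component space is $N$-adapted. I would state this choice explicitly at the start. I would also note that $\hat R(n)$ and $U(n)$ denote the same action, so Definitions~\ref{D:characters} and \ref{D:filters} refer to one representation.
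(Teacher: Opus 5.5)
Your argument is the paper's own proof: restrict $G$-invariance to $N$, expand $O$ in the $N$-adapted monomials, use $\Ad(n)\tau_k=\chi_k(n)\tau_k$, and compare coefficients by uniqueness of the expansion. The rest is bookkeeping the paper leaves implicit (including de-excitations, choosing $G$-adapted components that are also $N$-eigenvectors), and your projector variant $P_N=|N|^{-1}\sum_n\Ad(n)$ is a harmless reformulation that does not even need linear independence, since $O=P_NO$ already lies in the retained span.
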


\begin{proof}
The proof is the uniqueness of an expansion in a basis, applied to an
operator that is invariant under a subgroup. Write $O=\sum_k c_k\tau_k$ in
the basis of adapted monomials. For $n\in N\subseteq G$, invariance of $O$
under $G$ gives in particular
\[
  O=U(n)\,O\,U(n)^\dagger=\sum_k c_k\,\chi_k(n)\,\tau_k .
\]
Since $\{\tau_k\}$ is a basis the expansion is unique, so
$c_k=c_k\chi_k(n)$ for every $k$ and every $n\in N$. Whenever
$\chi_k(n)=-1$ for some $n$, this forces $c_k=0$, and what survives is
supported on the monomials with $\chi_k\equiv1$.
\end{proof}

Containment of spans passes to the algebras the two pools generate, because the
commutator is bilinear and an iterated commutator therefore reads only the span
of its arguments. That carries the comparison from operators to states.

\begin{cor}[Containment of the reachable sets, at every geometry]
\label{C:reach}
Assume the setting of Theorem~\ref{T:span}, with $G$ acting on real adapted
orbitals. Write
$\mathfrak{g}_{\mathrm{Sym}}=\Lie\bigl(\{\kappa_k:\tau_k\in
\mathcal{P}_{\mathrm{Sym}}\}\bigr)$, with $\kappa_k=\tau_k-\tau_k^\dagger$, and
$\mathfrak{g}_G=\Lie\bigl(\{T-T^\dagger:T\in\mathcal{P}_G\}\bigr)$ for the
algebras the two pools generate under iterated commutators. Then
$\mathfrak{g}_G\subseteq\mathfrak{g}_{\mathrm{Sym}}$, so the set
$\mathcal{R}_{\mathrm{Sym}}$ of states reachable from $\ket{\Phi_0}$ by the
Abelian-filtered generators contains the set $\mathcal{R}_G$ reachable by the
$G$-equivariant ones, and
\begin{equation}\label{E:reach}
  \inf_{\mathcal{R}_{\mathrm{Sym}}}E\;\leq\;\inf_{\mathcal{R}_G}E
\end{equation}
at every geometry, with no numerical input.
\end{cor}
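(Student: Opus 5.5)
The plan is to pass from the containment of spans in Theorem~\ref{T:span} to a containment of generators, then to a containment of Lie algebras, and finally to a containment of reachable sets, whose infimum inequality is then immediate.

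\emph{Step one: generators.} Take $T\in\mathcal{P}_G$. By Theorem~\ref{T:span}, $T=\sum_k c_k\tau_k$ with the sum running over $\tau_k\in\mathcal{P}_{\mathrm{Sym}}$. The adjoint is
\[
T^\dagger=\sum_k \bar c_k\tau_k^\dagger,
\]
so $T-T^\dagger$ is a combination of $\tau_k$ and $\tau_k^\dagger$ over retained indices. To express it through the $\kappa_k$, write $c_k=\alpha_k+i\beta_k$. Then
\[
T-T^\dagger=\sum_k\alpha_k\kappa_k+i\beta_k(\tau_k+\tau_k^\dagger).
\]
The imaginary part is where the hypothesis of real adapted orbitals enters. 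With real orbitals the matrices $D^{(\lambda)}(g)$ can be taken real, so $\Ad(g)$ commutes with the real structure. The real and imaginary parts of an invariant $T$ are then each invariant, and we may decompose $\mathcal{P}_G$ over the reals. I expect this to be the main obstacle. Its cleanest resolution is to read $\mathcal{P}_G$, as the ansatz does, as real combinations of excitation monomials, since only real amplitudes $\theta$ appear in \eqref{E:ansatz}. Under that reading $\beta_k=0$, and $T-T^\dagger\in\Span_{\mathbb R}\{\kappa_k:\tau_k\in\mathcal{P}_{\mathrm{Sym}}\}$. The same reading is needed so that $\mathfrak g_{\mathrm{Sym}}$ and $\mathfrak g_G$ are real Lie algebras of anti-Hermitian operators and the comparison is like for like. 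I will state this explicitly rather than hide it.

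\emph{Step two: algebras.} Since the generating set of $\mathfrak g_G$ lies in the real span of the generating set of $\mathfrak g_{\mathrm{Sym}}$, bilinearity of the commutator shows that every iterated commutator of generators of $\mathfrak g_G$ is a real combination of iterated commutators of the $\kappa_k$. This is the remark made just before the corollary. An induction on commutator depth then gives $\mathfrak g_G\subseteq\mathfrak g_{\mathrm{Sym}}$.

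\emph{Step three: states.} Define the reachable set $\mathcal R$ of a pool as the orbit $\{e^{X_1}\cdots e^{X_r}\ket{\Phi_0}\}$, where the $X_j$ range over the generators. On the finite-dimensional Fock space this orbit equals $\exp(\mathfrak g)\ket{\Phi_0}$ generated as a group, that is, the connected Lie subgroup with Lie algebra $\mathfrak g$. By the standard correspondence, an inclusion of Lie subalgebras gives an inclusion of the corresponding connected subgroups. Hence $\mathcal R_G\subseteq\mathcal R_{\mathrm{Sym}}$. An infimum over a larger set is no larger, which gives \eqref{E:reach}.

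Nothing in this argument depends on the Hamiltonian integrals or on the nuclear coordinates, since Theorem~\ref{T:span} is purely representation-theoretic. The inequality therefore holds at every geometry, with no numerical input.
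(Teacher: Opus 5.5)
Your proposal is correct and follows the paper's own route: place each $T-T^\dagger$ in the real span of the retained $\kappa_k$ via Theorem~\ref{T:span}, pass to the generated Lie algebras by bilinearity of the commutator, and then to the connected subgroups and their orbits of $\ket{\Phi_0}$. You differ only in being explicit about reading $\mathcal{P}_G$ as real combinations of monomials, which the paper covers by asserting the coefficients real because $\lvert G\rvert^{-1}\sum_R D(R)$ has real entries on real orbitals; flagging that reading is right, since generators of the form $i(\tau_k+\tau_k^\dagger)$ would otherwise break the containment.
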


\begin{proof}
The algebra generated by a set of operators depends on the set only through its
real span, so it is enough to place each generator of $\mathcal{P}_G$ in the
real span of the $\kappa_k$. Let $T\in\mathcal{P}_G$ and write
$T=\sum_k c_k\tau_k$, an expansion Theorem~\ref{T:span} supports on the retained
monomials. The coefficients are real, because $G$ acts on real adapted orbitals
by orthogonal matrices and the equivariant projection
$\lvert G\rvert^{-1}\sum_R D(R)$ then has real entries. Hence
$T-T^\dagger=\sum_k c_k(\tau_k-\tau_k^\dagger)=\sum_k c_k\kappa_k$, which is the
required containment of generating sets. Containment of the algebras gives
containment of the connected groups they generate and so of the reachable sets,
and an infimum taken over the larger set cannot be the larger of the two.
\end{proof}

Both sides of~\eqref{E:reach} are reachable sets of generated groups. A product
ansatz with a fixed operator ordering and a fixed depth reaches a subset of its
own, so~\eqref{E:reach} compares the two pools rather than two circuits of a
fixed shape. Neither side is claimed to contain the exact ground state. That
comparison, against the unrestricted ansatz, is the one
Proposition~\ref{P:enclosure} and Section~\ref{S:numerics} make.

\begin{exa}[The $E$ channel of $C_{3v}$]
\label{X:c3v}
The group $C_{3v}=\{E,C_3,C_3^2,\sigma_v,\sigma_v',\sigma_v''\}$ has a
two-dimensional irreducible representation $E$. The filter uses the maximal Abelian
subgroup $C_s=\{E,\sigma_v\}$, under which $E$ splits into $A'$ on the component $x$
and $A''$ on the component $y$. The single-excitation channel $1e\to2e$ carries four
spatial operators, of which the filter keeps the two diagonal ones and discards
$\hat a^\dagger_{2e_x}\hat a_{1e_y}$ and $\hat a^\dagger_{2e_y}\hat a_{1e_x}$, whose
characters under $\sigma_v$ equal $-1$. Projecting the four onto the irreducible
components of the full group gives
\begin{align}
  T^{(A_1)} &=\tfrac12\bigl(\hat a^\dagger_{2e_x}\hat a_{1e_x}
                          +\hat a^\dagger_{2e_y}\hat a_{1e_y}\bigr),
  \label{E:TA1}\\
  T^{(A_2)} &=\tfrac12\bigl(\hat a^\dagger_{2e_x}\hat a_{1e_y}
                          -\hat a^\dagger_{2e_y}\hat a_{1e_x}\bigr),
  \label{E:TA2}\\
  T^{(E)}_1 &=\tfrac12\bigl(\hat a^\dagger_{2e_x}\hat a_{1e_x}
                          -\hat a^\dagger_{2e_y}\hat a_{1e_y}\bigr),
  \qquad
  T^{(E)}_2 =\tfrac12\bigl(\hat a^\dagger_{2e_x}\hat a_{1e_y}
                          +\hat a^\dagger_{2e_y}\hat a_{1e_x}\bigr).
  \label{E:TE}
\end{align}
The retained pair spans $T^{(A_1)}$ together with $T^{(E)}_1$, so the unique totally
symmetric operator of the channel survives the filter. The discarded pair spans
$T^{(A_2)}$ and $T^{(E)}_2$ and carries no $A_1$ component, so a full $C_{3v}$
invariance filter discards it as well. The Abelian subgroup is the less restrictive of
the two filters and never the more restrictive, which is Theorem~\ref{T:span} in the
smallest case where the two can differ.
\end{exa}

\begin{prop}[Variational enclosure at a fixed geometry]
\label{P:enclosure}
Suppose the ansatz preserves particle number and $S_z$, and suppose some
$\theta^\star\in V$ satisfies $E(\theta^\star)=E_{\mathrm{ref}}+\delta$,
where $E_{\mathrm{ref}}$ is the exact ground-state energy in the same
symmetry sector and $\delta\geq0$. Then
\begin{equation}\label{E:enclosure}
  0\;\leq\;\inf_{V}E-\inf_{\text{full}}E\;\leq\;\delta .
\end{equation}
\end{prop}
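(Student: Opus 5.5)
The proof will be a sandwich between three energies: the two infima and the exact sector ground-state energy $E_{\mathrm{ref}}$. The lower bound is set inclusion. The filtered ansatz is the function $E$ of Definition~\ref{D:ansatz} restricted to the linear subvariety $V$, as noted after \eqref{E:partition}, and $V$ is a subset of the full parameter domain. So $\inf_V E\geq\inf_{\text{full}}E$, which is the left inequality of \eqref{E:enclosure}. No symmetry hypothesis is used in this step.

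The upper bound needs $\inf_{\text{full}}E\geq E_{\mathrm{ref}}$, and I will get it from the Rayleigh--Ritz principle inside the sector. The reference $\ket{\Phi_0}$ has definite particle number and definite $S_z$. Every generator $\kappa_k$ commutes with the number operator and with $S_z$, which is the hypothesis that the ansatz preserves both. Hence every factor of $U(\theta)$ commutes with them, and $\ket{\psi(\theta)}$ is a normalised state in the same particle-number and $S_z$ sector for every $\theta$. By \ref{H:A1} and the particle-number and spin conservation of \eqref{E:H}, $\mathcal{H}$ preserves that sector. Its smallest eigenvalue there is $E_{\mathrm{ref}}$, so $E(\theta)=\bra{\psi(\theta)}\mathcal{H}\ket{\psi(\theta)}\geq E_{\mathrm{ref}}$ for all $\theta$. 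Combining this with $\theta^\star\in V$ gives
\[
  \inf_V E-\inf_{\text{full}}E\;\leq\;E(\theta^\star)-E_{\mathrm{ref}}=\delta .
\]

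The main obstacle is fixing what ``the same symmetry sector'' means so that the variational bound is valid. If the sector also carried a point-group label, the unrestricted ansatz might leave it, since directions in $A$ break $N$-symmetry away from $V$. In that case the Rayleigh--Ritz bound would fail for $\theta\notin V$. I will therefore take the sector to be defined only by particle number and $S_z$, the quantities the hypothesis says are preserved, and take $E_{\mathrm{ref}}$ to be the lowest eigenvalue of $\mathcal{H}$ there. With this reading the inequality $E(\theta)\geq E_{\mathrm{ref}}$ holds on the whole domain. A smaller $E_{\mathrm{ref}}$ only makes $\delta$ larger, so this choice is the conservative one, and the proposition remains a purely variational statement that uses none of Theorem~\ref{T:critical}.
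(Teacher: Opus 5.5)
Your proof is correct and follows the paper's argument. The left inequality comes from $V$ being a subset of the full domain, and the right one from $\inf_V E\le E(\theta^\star)$ together with the variational bound $\inf_{\text{full}}E\ge E_{\mathrm{ref}}$, which holds because every ansatz state stays in the particle-number and $S_z$ sector of $E_{\mathrm{ref}}$; your explicit reading of that sector is the one the paper's proof relies on, and your appeal to \ref{H:A1} there is harmless but unnecessary.
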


\begin{proof}
The left inequality holds because $V$ is a subset of the full parameter
domain, so the infimum over $V$ cannot be smaller. For the right inequality,
$\inf_V E\leq E(\theta^\star)=E_{\mathrm{ref}}+\delta$, while
$\inf_{\text{full}}E\geq E_{\mathrm{ref}}$ by the variational principle,
which applies because every state the ansatz generates lies in the sector of
$E_{\mathrm{ref}}$.
\end{proof}

\begin{rem}[What the smaller orbit is sufficient for]
\label{R:orbit}
The retained generators of one pair of degenerate shells do commute, and the subgroup
they generate is a torus sitting strictly inside the unitary group of that shell. The
step that does not follow is the next one, because reaching the full unitary group of
the shell is not what the variational problem asks for. The target is the ground
state, it is totally symmetric under the hypotheses of Section~\ref{S:setup}, and
Corollary~\ref{C:reach} places every state the fully invariant generators reach
inside the set the filtered ones reach. Sufficiency in that sense is sufficiency
against the invariant pool, and it holds at every geometry. Against the
unrestricted ansatz nothing of the kind is proved, and
Proposition~\ref{P:enclosure} bounds that gap instead, using the
variational principle and one exhibited point, with no representation theory in the
argument. The enclosure binds at the geometry where the point was exhibited,
Section~\ref{S:numerics} measures what happens away from it, and nothing proved here
says that the two infima are equal. Proposition~\ref{P:dla} adds that the commuting
behaviour does not extend from one pair of shells to the whole ansatz.
\end{rem}

\section{Certificates for the orbital basis, and a mechanism for the reported
values}
\label{S:certificates}

The results above say that the filter removes no fully invariant operator, and that the
energy has vanishing gradient along every direction it does remove. Parts of the
literature report a measurable penalty, so the discrepancy has to be locatable in what
was measured. This section
gives two statements a reader can check against a published table without running a
calculation, and it closes with a mechanism that accounts for the reported values.

\begin{prop}[Vanishing cross-component integrals in an adapted basis]
\label{P:integrals}
Let $N\leq G$ be Abelian and let $\{\phi_{i,\mu}\}$ be molecular orbitals
obtained by diagonalising the Fock operator within each $N$-irrep block.
Then for orbitals belonging to distinct characters,
$\chi_{i,\mu}\neq\chi_{j,\nu}$, the one-electron integrals satisfy
\begin{equation}\label{E:vanishing}
  h^{(i,\mu)}_{(j,\nu)}
  =\int\phi^*_{i,\mu}(\mathbf{x})\,\hat h\,\phi_{j,\nu}(\mathbf{x})\,
   d\mathbf{x}=0,
\end{equation}
and the two-electron integrals $h^{pq}_{rs}$ vanish whenever the character
product $\chi_p\chi_q\chi_r^{-1}\chi_s^{-1}$ is non-trivial on $N$.
\end{prop}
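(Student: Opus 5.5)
The plan is to show that every integral in the statement equals itself multiplied by a character value, so that a nontrivial character forces it to vanish. The argument has the same shape as the proof of Theorem~\ref{T:span}: invariance of an object under $N$ combined with a sign that an element of $N$ introduces. First I would record the transformation law of the orbitals. Each $N$-irrep block is invariant under $N$, and the Fock operator commutes with the $N$-action, because the nuclear framework is $N$-invariant and the density built from a symmetric closed-shell reference is invariant as well. Diagonalising inside a block therefore produces orbitals with $\phi_{i,\mu}(n^{-1}\mathbf{x})=\chi_{i,\mu}(n)\,\phi_{i,\mu}(\mathbf{x})$ for every $n\in N$. Block diagonalisation is exactly what guarantees this, since no orbital can mix two characters. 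This is the only step where the construction of the orbitals enters.

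For the one-electron integral I would change variables $\mathbf{x}\mapsto n\mathbf{x}$, with $n$ acting by an orthogonal map of Jacobian one. The core operator $\hat h$ (kinetic energy plus nuclear attraction) commutes with this substitution by \ref{H:A1} restricted to $N\subseteq G$. The integral picks up the factor $\overline{\chi_{i,\mu}(n)}\chi_{j,\nu}(n)$, which equals $\chi_{i,\mu}(n)^{-1}\chi_{j,\nu}(n)$ because characters are unitary; here they are $\pm1$ by \eqref{E:char}. If the two characters differ, some $n$ makes this factor different from $1$, and $h=ch$ with $c\neq1$ forces $h=0$.

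For the two-electron integral $h^{pq}_{rs}=\iint\phi_p^*(\mathbf{x}_1)\phi_q^*(\mathbf{x}_2)\,|\mathbf{x}_1-\mathbf{x}_2|^{-1}\,\phi_r(\mathbf{x}_1)\phi_s(\mathbf{x}_2)$, I would apply the same substitution simultaneously in both electron coordinates. The Coulomb kernel is invariant under any orthogonal map, and the integral acquires $\overline{\chi_p\chi_q}\,\chi_r\chi_s(n)$. With real characters this equals $\chi_p\chi_q\chi_r^{-1}\chi_s^{-1}(n)$, and the same argument forces vanishing whenever this product is nontrivial on $N$.

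I expect no real obstacle here. The only point that needs care is justifying that the Fock operator, and not only $\mathcal H$, commutes with $N$, since this is what makes the blockwise eigenvectors carry pure characters. I would settle it by invoking \ref{H:A2}: an $N$-invariant reference density makes the Coulomb and exchange operators $N$-invariant. In the degenerate-eigenvalue case the eigenvectors can still be chosen inside a single block, so no character mixing can enter.
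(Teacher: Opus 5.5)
Your proposal is correct and follows the paper's proof. Both use the same chain: $N$-invariance of the Fock operator gives orbitals of definite character, and invariance of $\hat h$ and of the Coulomb kernel under $N$ (acting on both coordinates at once in the two-electron case) forces the integrals to vanish. Where the paper cites orthogonality of irreducible representations, you write out the change of variables explicitly, which is the same argument for one-dimensional characters.
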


\begin{proof}
The Fock operator commutes with every element of $N$, so in the adapted
basis it is block diagonal with respect to the characters of $N$ and its
eigenvectors carry definite characters. The one-electron Hamiltonian
$\hat h$ likewise commutes with $N$, so its matrix elements between orbitals
of different characters vanish by orthogonality of irreducible
representations \cite[Ch.~2]{serre1977}. The same argument applied to the
Coulomb operator, which is invariant under $N$ acting simultaneously on both
coordinates, gives the two-electron statement.
\end{proof}

The pool the Hamiltonian selects goes inside the Abelian pool through a smaller
set than $\mathcal{P}_{\mathrm{Sym}}$, namely the set the equivariant pool
touches. Naming that set is what separates the operators a group average keeps
from the operators a shell label admits.

\begin{prop}[The Hamiltonian-informed pool lies in the support of the
equivariant pool]
\label{P:hisupport}
Assume \ref{H:A1} and \ref{H:A2}, and write
\begin{equation}\label{E:supp}
  \supp\mathcal{P}_G=\bigl\{\tau_k:\ \tau_k \text{ carries a nonzero
  coefficient in some } T\in\mathcal{P}_G\bigr\}
\end{equation}
for the set of excitation monomials the $G$-equivariant pool touches, the
coefficients being those of the expansion in the basis
$\mathcal{P}_{\mathrm{UCCSD}}$ of Definition~\ref{D:ansatz}. Let the
Hamiltonian-informed filter retain an
excitation $\tau_k$ when the index tuple of $\tau_k$ occurs among the terms of
$\mathcal{H}$ in \eqref{E:H}. Then
\begin{equation}\label{E:hisupport}
  \mathcal{P}_{\mathrm{Hi}}\subseteq\supp\mathcal{P}_G .
\end{equation}
\end{prop}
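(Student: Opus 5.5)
The plan is to exhibit, for each retained monomial, an explicit $G$-equivariant operator whose expansion contains that monomial with nonzero coefficient, and to build it from the Hamiltonian itself. Take $\tau_k\in\mathcal{P}_{\mathrm{Hi}}$, so its index tuple occurs in \eqref{E:H} with a nonzero integral. If $\tau_k$ is a single excitation $\hat a^\dagger_a\hat a_i$, this means $h^a_i\neq0$. If it is a double $\hat a^\dagger_a\hat a^\dagger_b\hat a_j\hat a_i$, this means $h^{ab}_{ij}\neq0$ after antisymmetrisation. The candidate is the excitation-rank-preserving block of $\mathcal{H}$ containing $\tau_k$. For a single, set $T_1:=\sum_{i\,\mathrm{occ},\,a\,\mathrm{vir}}h^a_i\,\hat a^\dagger_a\hat a_i$. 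For a double, set $T_2:=\tfrac12\sum h^{ab}_{ij}\,\hat a^\dagger_a\hat a^\dagger_b\hat a_j\hat a_i$, summed over occupied $i,j$ and virtual $a,b$. Expanding either in the basis $\mathcal{P}_{\mathrm{UCCSD}}$ shows at once that $\tau_k$ carries the coefficient $h^a_i$, respectively a nonzero multiple of the antisymmetrised $h^{ab}_{ij}$. So everything reduces to proving $T_1,T_2\in\mathcal{P}_G$.

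The equivariance step is the core of the argument. By \ref{H:A1}, $\Ad(g)\mathcal{H}=\mathcal{H}$ for all $g$. By \ref{H:A2}, the occupied space spanned by the orbitals of $\Phi_0$ is $G$-stable, so its complement, the virtual space, is also stable. In the adapted basis this follows because occupation is constant on shells: \eqref{E:action} mixes components only within a shell $(\lambda,i)$, and a closed-shell invariant determinant fills whole shells. Consequently $\Ad(g)$ preserves the decomposition of the operator space by the pattern of occupied and virtual indices in the creators and annihilators, since it maps $\hat a^\dagger_{\mathrm{vir}}$ into the span of virtual creators, and likewise for the other three types. The projection onto the block ``$r$ virtual creators, $r$ occupied annihilators'' therefore commutes with every $\Ad(g)$. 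It follows that $T_1$ and $T_2$, being images of the invariant $\mathcal{H}$ under these commuting projections, are themselves invariant. Equivalently, $T_r$ is the component of the group average $|G|^{-1}\sum_g\Ad(g)\mathcal{H}=\mathcal{H}$ lying in that block.

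The step I expect to be the main obstacle is making the block decomposition rigorous in the presence of normal ordering. The two-body term of \eqref{E:H} is written with creators left of annihilators, but $\Ad(g)$ acts linearly on each factor, so normal order and the index pattern are both preserved and no reordering contractions arise. I would state this explicitly: the monomials $\hat a^\dagger_p\hat a^\dagger_q\hat a_s\hat a_r$ with fixed occupied/virtual type of $(p,q,r,s)$ span an $\Ad(G)$-stable subspace, and the full operator space is a direct sum of such subspaces.

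A secondary point is matching the index tuple convention. The tuple ``occurs among the terms of $\mathcal{H}$'' should be read with $h^{ab}_{ij}$, or its antisymmetrised form, nonzero; I would note that $h^{ab}_{ij}$ and $h^{ba}_{ji}$ contribute to the same monomial. If cancellation under antisymmetrisation is a concern, I would define $\mathcal{P}_{\mathrm{Hi}}$ through the nonvanishing antisymmetrised coefficient, which is the coefficient appearing in $T_2$. This gives \eqref{E:hisupport}.
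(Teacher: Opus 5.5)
Your proof is correct and is essentially the paper's. The paper applies one projection $\Pi$ that keeps the occupied-to-virtual monomials, notes that it commutes with every $\Ad(g)$ because (A2) makes the occupied and virtual spaces $G$-stable, and concludes $\Pi\mathcal{H}\in\mathcal{P}_G$ from (A1); your $T_1$ and $T_2$ are simply the rank-one and rank-two pieces of $\Pi\mathcal{H}$, and your remarks on normal ordering and on the antisymmetrised coefficient spell out what the paper packs into identifying $\mathcal{P}_{\mathrm{Hi}}$ with the support of $\Pi\mathcal{H}$.
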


\begin{proof}
The Hamiltonian is not itself an element of the excitation space, so the
argument is a projection followed by one line. Let $\Pi$ keep from an operator
those monomials whose index tuple excites out of occupied and into virtual
orbitals, and discard the rest, so that $\mathcal{P}_{\mathrm{Hi}}$ is the
support of $\Pi\mathcal{H}$ by the definition just given. Hypothesis
\ref{H:A2} makes the occupied space $G$-stable, and with it the virtual space,
so $\Ad(g)$ carries an excitation monomial to a combination of excitation
monomials and $\Pi$ commutes with every $\Ad(g)$. Hypothesis \ref{H:A1} gives
$\Ad(g)\mathcal{H}=\mathcal{H}$, whence
\[
  \Ad(g)\,\Pi\mathcal{H}=\Pi\,\Ad(g)\mathcal{H}=\Pi\mathcal{H}
  \qquad\text{for every } g\in G,
\]
so that $\Pi\mathcal{H}$ is itself an element of $\mathcal{P}_G$. Every
monomial carrying a nonzero coefficient in an element of $\mathcal{P}_G$ lies
in $\supp\mathcal{P}_G$, which is \eqref{E:hisupport}.
\end{proof}

\begin{cor}[In an adapted basis, the Hamiltonian-informed pool is contained
in the Abelian pool]
\label{T:hiuccsd}
Let the orbitals be adapted to $N$ as in Definition~\ref{D:characters}. Then
every excitation the Hamiltonian-informed filter retains satisfies
$\chi_k\equiv1$ on $N$, so
\begin{equation}\label{E:hiuccsd}
  \mathcal{P}_{\mathrm{Hi}}\subseteq\supp\mathcal{P}_G
  \subseteq\mathcal{P}_{\mathrm{Sym}} .
\end{equation}
\end{cor}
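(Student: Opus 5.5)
The plan is to chain Proposition~\ref{P:hisupport} with Theorem~\ref{T:span}. The first inclusion in \eqref{E:hiuccsd}, $\mathcal{P}_{\mathrm{Hi}}\subseteq\supp\mathcal{P}_G$, is exactly \eqref{E:hisupport}. Its hypotheses \ref{H:A1} and \ref{H:A2} are the standing assumptions of Section~\ref{S:setup}, and we take them to be in force here. So the work lies in the second inclusion, $\supp\mathcal{P}_G\subseteq\mathcal{P}_{\mathrm{Sym}}$. Once that holds, the statement that every retained excitation has $\chi_k\equiv1$ on $N$ is just the definition of $\mathcal{P}_{\mathrm{Sym}}$ in Definition~\ref{D:filters}.

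For the second inclusion, take $\tau_k\in\supp\mathcal{P}_G$. By \eqref{E:supp} there is some $T\in\mathcal{P}_G$ whose expansion $T=\sum_j c_j\tau_j$ has $c_k\neq0$. Here the basis of $\mathcal{P}_{\mathrm{UCCSD}}$ must be taken in orbitals adapted to $N$, as the corollary assumes. The proof of Theorem~\ref{T:span} then applies verbatim to $T$. Since $T$ is invariant under $G\supseteq N$, we have $T=U(n)TU(n)^\dagger=\sum_j c_j\chi_j(n)\tau_j$. Uniqueness of the expansion gives $c_j=c_j\chi_j(n)$ for every $n\in N$. Because $c_k\neq0$, it follows that $\chi_k(n)=1$ for all $n$, and hence $\tau_k\in\mathcal{P}_{\mathrm{Sym}}$.

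I expect the main obstacle to be a consistency point rather than any computation. The support in \eqref{E:supp} depends on the basis, so $\supp\mathcal{P}_G$ must be read in the $N$-adapted basis of Definition~\ref{D:characters}. This is the same basis in which $\mathcal{P}_{\mathrm{Sym}}$ and the Hamiltonian index tuples of \eqref{E:H} are written, and the Abelian characters act diagonally on its monomials, as in \eqref{E:char}. I will state this identification explicitly. As a cross-check I will also note the independent route through Proposition~\ref{P:integrals}. There, the integrals $h^{pq}_{rs}$ and $h^p_q$ with non-trivial character product vanish, so any index tuple occurring in $\mathcal{H}$ already has $\chi_k\equiv1$, which gives $\mathcal{P}_{\mathrm{Hi}}\subseteq\mathcal{P}_{\mathrm{Sym}}$ directly. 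Neither route assumes that $\rho_\lambda|_N$ is multiplicity free or that the irreps are faithful.
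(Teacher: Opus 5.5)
Your proposal is correct and follows the paper's proof: the first inclusion is Proposition~\ref{P:hisupport}, and the second is the coefficient-uniqueness argument of Theorem~\ref{T:span}, applied to an element of $\mathcal{P}_G$ in which $\tau_k$ has nonzero coefficient and read in the $N$-adapted basis. Your cross-check through Proposition~\ref{P:integrals} is the same remark the paper makes right after the corollary. As you note, that route gives only the outer inclusion $\mathcal{P}_{\mathrm{Hi}}\subseteq\mathcal{P}_{\mathrm{Sym}}$ and not the middle term.
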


\begin{proof}
Theorem~\ref{T:span} expands every element of $\mathcal{P}_G$ on the monomials
with $\chi_k\equiv1$ on $N$, so those are the only monomials the equivariant
pool touches and the second inclusion holds. The first is
Proposition~\ref{P:hisupport}.
\end{proof}

Proposition~\ref{P:integrals} reaches the same containment through the
integrals, where the argument above works through the operators. In the adapted basis
$h^p_q=0$ unless $\chi_p=\chi_q$, and $h^{pq}_{rs}=0$ unless
$\chi_p\chi_q\chi_r^{-1}\chi_s^{-1}$ is trivial, so every surviving term of
$\mathcal{H}$ has a totally symmetric index tuple and a retained excitation
carries $\chi_k\equiv1$ at once. What the route through the support adds is the
middle term of \eqref{E:hiuccsd}. That term is the set the equivariant pool
touches, and it is strictly smaller than $\mathcal{P}_{\mathrm{Sym}}$ in both
non-Abelian molecules of Section~\ref{S:numerics}.

\begin{rem}[The support is smaller than the $A_1$ content]
\label{R:a1content}
Membership of $\supp\mathcal{P}_G$ is not the condition that the shells of
$\tau_k$ admit an invariant. A pair of shells whose tensor product contains the
trivial representation can carry monomials with no component along the
invariant line, and a filter that keeps such a monomial on the strength of the
shell label keeps an operator the group average annihilates.
Example~\ref{X:c3v} exhibits the difference in its smallest instance. The four
monomials of the $1e\to2e$ channel are built from one pair of shells, whose
product contains $A_1$ once; the equivariant operators of the channel are the
multiples of $T^{(A_1)}$ in \eqref{E:TA1}, whose support is the two diagonal
monomials; the off-diagonal pair lies outside $\supp\mathcal{P}_G$ although its
shells admit the invariant. The two conditions are not the same. For an
Abelian group they do coincide, every
irreducible representation being one-dimensional, which is why the difference
is invisible in the setting where the literature works.
\end{rem}

\begin{rem}[The reverse inclusion, and why it is not stated here as a theorem]
\label{R:hisupport}
Inclusion \eqref{E:hisupport} is not an equality, and the missing direction is
of a different kind from the one proved. To assert
$\supp\mathcal{P}_G\subseteq\mathcal{P}_{\mathrm{Hi}}$ is to assert that no
integral of $\mathcal{H}$ vanishes accidentally on the equivariant support. The
support is where a generic element of $\mathcal{P}_G$ is nonzero, while
$\mathcal{P}_{\mathrm{Hi}}$ is where one particular element of it is, so the
inclusion says that the particular one is generic. Symmetry cannot supply that,
because it constrains which coefficients must vanish and never which must not.
A molecule with an accidental zero, from spatial separation or from a
basis-set artefact, would satisfy Proposition~\ref{P:hisupport} and violate its
converse. In the closed-shell cases examined in preparing this work, spanning
two non-Abelian groups, two basis sets and an Abelian control, the two sets
agreed exactly. That is evidence for genericity and not a proof of it, so the
equality is recorded here as an observation and used nowhere below.
Where the equality does hold, the Hamiltonian-informed
construction of \cite{he2026} and the full-group filter select the same
operators in an adapted basis and differ in how they parameterise them, which
makes the choice between the two a choice of parameterisation and not of reach.
\end{rem}

\begin{cor}[A published parameter count certifies the basis]
\label{C:certificate}
In an adapted basis the Hamiltonian-informed ansatz cannot carry more
parameters than the symmetry-filtered one. A reported pair of counts in
which it carries more therefore certifies that the Hamiltonian was not in an
adapted basis, with no calculation required of the reader.
\end{cor}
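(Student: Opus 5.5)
The plan is to read the corollary as the contrapositive of Corollary~\ref{T:hiuccsd}, once parameter counts are identified with pool cardinalities.

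First I will fix what a parameter count means. In both constructions a parameter is attached to a retained excitation monomial, or to a class of monomials under the standard singlet tying. So the symmetry-filtered count is a function of $\mathcal{P}_{\mathrm{Sym}}$ and the Hamiltonian-informed count is the same function of $\mathcal{P}_{\mathrm{Hi}}$. I will need this function to be monotone under inclusion of pools. That holds when parameters are counted as monomials. It also holds when they are counted as tying classes, because by \ref{H:A3} the tying is parity consistent. A class that meets $\mathcal{P}_{\mathrm{Hi}}$ consists of monomials sharing one character, which is trivial on $N$ by Corollary~\ref{T:hiuccsd}, so the class lies in $\mathcal{P}_{\mathrm{Sym}}$. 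Hence the classes meeting $\mathcal{P}_{\mathrm{Hi}}$ form a subset of the classes meeting $\mathcal{P}_{\mathrm{Sym}}$.

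Second, I will invoke Corollary~\ref{T:hiuccsd}. In a basis adapted to $N$ it gives $\mathcal{P}_{\mathrm{Hi}}\subseteq\supp\mathcal{P}_G\subseteq\mathcal{P}_{\mathrm{Sym}}$. Alternatively, Proposition~\ref{P:integrals} gives the same conclusion directly, because every surviving index tuple of $\mathcal{H}$ has trivial character product. Monotonicity then yields $\#\mathrm{params}(\mathcal{P}_{\mathrm{Hi}})\le\#\mathrm{params}(\mathcal{P}_{\mathrm{Sym}})$, which is the first sentence of the corollary. The certificate is the contrapositive. If a reported pair has the Hamiltonian-informed count strictly larger, the hypothesis of Corollary~\ref{T:hiuccsd} fails, so the orbitals were not adapted to $N$. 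The reader only compares two integers, so no calculation is required.

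The main obstacle is the bookkeeping behind the first step. It must be certain that the two published counts refer to the same orbital set and the same active space. They must also use the same tying convention, either spin-orbital or spin-adapted, and both must count the excitation part only. Otherwise the inequality compares incommensurable quantities. I would handle this by stating the certificate relative to a common excitation basis $\mathcal{P}_{\mathrm{UCCSD}}$, as in Definition~\ref{D:ansatz}. The projection $\Pi$ of Proposition~\ref{P:hisupport} also depends only on the occupied/virtual split, which is shared by both pools. A second subtlety is that "adapted" must mean adapted to the same $N$ that defines $\mathcal{P}_{\mathrm{Sym}}$. I would make that explicit, noting that the certificate detects failure of adaptation to that subgroup specifically.
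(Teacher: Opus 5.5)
Your proposal is correct and follows the paper's own route. The paper gives no separate proof, and the corollary is exactly the inclusion $\mathcal{P}_{\mathrm{Hi}}\subseteq\supp\mathcal{P}_G\subseteq\mathcal{P}_{\mathrm{Sym}}$ of Corollary~\ref{T:hiuccsd} (also reachable through Proposition~\ref{P:integrals}, as the paper notes), turned into an inequality of counts and read contrapositively. Your bookkeeping goes further than the paper: you require a common excitation basis, a common tying convention and adaptation to the same subgroup $N$, which the paper assumes without stating when it compares the published pairs of counts.
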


Table~2 of \cite{he2026} reports both counts for ten molecules, and in five of them the
Hamiltonian-informed ansatz carries more parameters than the symmetry-filtered one: 93
against 75 for NH$_3$, 188 against 65 for CH$_4$, 65 against 49 for N$_2$, 106 against
80 for CO, and 148 against 110 for NaH. By Corollary~\ref{C:certificate} none of those
five pairs is possible in a symmetry-adapted basis, so the Hamiltonian entering that
pipeline was not in one. In the other five rows, HF, LiH, H$_2$O, BeH$_2$ and
C$_2$H$_4$, the two counts agree exactly, which is the equality the containment allows.

Those same five rows, and no others, are the ones whose symmetry-filtered energies the
authors strike through as insufficiently accurate. The agreement is exact across the ten,
and the point group does not sort them. Two of the five have finite non-Abelian groups
and three are linear, but HF, LiH and BeH$_2$ are linear as well, reduced to the same
Abelian subgroups $C_{2v}$ and $D_{2h}$, and their counts satisfy the certificate. A
property of the point group cannot separate molecules that way. A basis adapted in some
runs and not in others can, and the certificate identifies which runs those are before
any energy is compared.

That reading uses the parameter counts alone and assumes nothing about how the runs were
optimised. A second
consequence follows from Corollary~\ref{C:protocol} and is independent of it. Those
authors record their protocol. Every parameter is initialised to zero, all optimisation
uses the BFGS algorithm of SciPy, and sampling and hardware noise are neglected. That is
the protocol of Corollary~\ref{C:protocol}. Its remaining hypotheses are supplied by the
defaults of that implementation, which sets the initial approximate Hessian to the
identity and takes its step lengths from a Wolfe line search, and the line search enforces
the curvature condition $y^{\top}s>0$, which makes the skipped secant update vacuous. The
one hypothesis no default supplies is whether the gradients are analytic, and those
authors do not state it. Under the corollary the filtered and unfiltered
ans\"atze cannot be separated at all, so the two entries of their Table~2,
$1.09\times10^{-4}$ and $2.78\times10^{-2}$, cannot both come from the same Hamiltonian
in adapted orbitals. The authors state explicitly that the theoretical
underpinnings of their method for non-Abelian groups have not been formally
established, and that their conclusions rest on numerical results. That hedge
delimits the claim, and the diagnosis above operates within it.

\begin{prop}[The dynamical Lie algebra of the filtered ansatz is
non-Abelian]
\label{P:dla}
Let $\mathfrak{g}_{\mathrm{Sym}}
=\Lie\bigl(\{\kappa_k:\tau_k\in\mathcal{P}_{\mathrm{Sym}}\}\bigr)$ be the
Lie algebra generated by the retained anti-Hermitian generators under
iterated commutators, which determines the reachable set of the product
ansatz \cite[Ch.~3]{dalessandro2007}. If the filtered pool contains two
single excitations out of a common totally symmetric occupied orbital into
distinct totally symmetric virtual orbitals, then
$\mathfrak{g}_{\mathrm{Sym}}$ is non-Abelian.
\end{prop}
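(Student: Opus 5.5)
The plan is to exhibit two retained generators whose commutator is nonzero. That single nonzero commutator already forces $\mathfrak{g}_{\mathrm{Sym}}$ to be non-Abelian, since the algebra contains both generators and so their commutator. Let $i$ be the common totally symmetric occupied orbital and $a\neq b$ the two totally symmetric virtuals. The two singles $\tau_1=\hat a^\dagger_a\hat a_i$ and $\tau_2=\hat a^\dagger_b\hat a_i$ carry the trivial character of $N$, because each involved orbital does. They therefore lie in $\mathcal{P}_{\mathrm{Sym}}$ by Definition~\ref{D:filters}. We set $\kappa_1=\tau_1-\tau_1^\dagger$ and $\kappa_2=\tau_2-\tau_2^\dagger$. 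I would work with spin orbitals, or equivalently with the singlet-tied spatial generators; the computation below is the same up to a sum over spin.

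The key step is the commutator. Using the canonical relation $[\hat a^\dagger_p\hat a_q,\hat a^\dagger_r\hat a_s]=\delta_{qr}\hat a^\dagger_p\hat a_s-\delta_{ps}\hat a^\dagger_r\hat a_q$, I expand $[\kappa_1,\kappa_2]$ into four terms. Since $a\neq b$, and both differ from $i$:
\begin{itemize}
\item $[\tau_1,\tau_2]=0$;
\item $[\tau_1^\dagger,\tau_2^\dagger]=0$;
\item $[\tau_1,\tau_2^\dagger]=[\hat a^\dagger_a\hat a_i,\hat a^\dagger_i\hat a_b]=\hat a^\dagger_a\hat a_b$;
\item $[\tau_1^\dagger,\tau_2]=[\hat a^\dagger_i\hat a_a,\hat a^\dagger_b\hat a_i]=-\hat a^\dagger_b\hat a_a$.
\end{itemize}
Collecting the signs gives
\[
[\kappa_1,\kappa_2]=-[\tau_1,\tau_2^\dagger]-[\tau_1^\dagger,\tau_2]=-\bigl(\hat a^\dagger_a\hat a_b-\hat a^\dagger_b\hat a_a\bigr).
\]
This is the anti-Hermitian virtual--virtual rotation generator. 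It is a nonzero operator, since $\hat a^\dagger_a\hat a_b$ and $\hat a^\dagger_b\hat a_a$ are linearly independent monomials for $a\neq b$.

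The only point needing care, which I expect to be the sole obstacle, is confirming that the commutator really is nonzero as an operator on Fock space. In particular one might worry that it acts trivially on the states of interest. For non-Abelianness of the Lie algebra, however, nonvanishing as an operator suffices. One can check this concretely by applying it to a state with $a$ occupied and $b$ empty, which gives a nonzero vector. For the spin-adapted version, summing the identity over $\sigma=\alpha,\beta$ gives the spin-summed rotation $-\sum_\sigma(\hat a^\dagger_{a\sigma}\hat a_{b\sigma}-\mathrm{h.c.})$, which is again nonzero. Finally, I would remark that the generated element lies outside the span of excitation generators, because it is a virtual--virtual operator. This shows that the Lie closure leaves the pool's span, consistent with Remark~\ref{R:orbit}'s comment that the commuting behaviour within one shell pair does not extend to the whole ansatz.
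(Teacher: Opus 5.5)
Your proposal is correct and matches the paper's proof: you take the two retained singles out of $i$, keep the surviving crossed terms of $[\kappa_1,\kappa_2]$, and obtain the virtual--virtual rotation $\hat a^\dagger_b\hat a_a-\hat a^\dagger_a\hat a_b\neq0$, with the same sign as in \eqref{E:commutator}. Your spin-summed version and the explicit Fock-space check are correct extras that the paper does not spell out.
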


\begin{proof}
One explicit pair of retained generators settles it, and no rank computation
is needed. Let $i$ be a totally symmetric occupied orbital and let $a\neq b$
be totally symmetric virtual orbitals, so that both single excitations
$i\to a$ and $i\to b$ satisfy $\chi_k\equiv1$ and lie in the filtered pool.
Writing $\kappa_1=\hat a^\dagger_a\hat a_i-\hat a^\dagger_i\hat a_a$ and
$\kappa_2=\hat a^\dagger_b\hat a_i-\hat a^\dagger_i\hat a_b$, the terms that
survive in the commutator are the crossed ones, and
\begin{equation}\label{E:commutator}
  [\kappa_1,\kappa_2]=\hat a^\dagger_b\hat a_a-\hat a^\dagger_a\hat a_b
  \neq0,
\end{equation}
the virtual--virtual rotation between $a$ and $b$.
\end{proof}

\begin{rem}[Scope of the certificates]
\label{R:scope_cert}
Both certificates are statements about the orbital basis and carry no verdict on the
merit of the Hamiltonian-informed construction, which recovers the full equivariant
set precisely when the basis is not adapted. The certificate applies directly to
\cite{he2026}. For \cite{he2026a} the reach was checked rather than assumed. That paper
runs no symmetry-filtered comparison of its own, so the certificate has no pair of
counts to test there. It does build its excitation-operator pool on the
Hamiltonian-informed ansatz of \cite{he2026}, so the question of the basis passes to it
by construction, and its optimisation is the same quasi-Newton routine, which places its
runs under Corollary~\ref{C:protocol} on the same footing. For \cite{he2026b} the check
has not been made. Read against \cite{cao2022}, the certificates say that the vanishing one
literature treats as an obstacle is the economy the other exploits, and the orbital
basis decides which of the two readings applies.
\end{rem}

Within a degenerate shell the orbitals are fixed only up to a unitary transformation
acting inside the shell, since any such transformation leaves the Fock eigenvalue
unchanged. Two independent self-consistent-field calculations on the same molecule
therefore return bases that generally differ by an element of that unitary group.
Irrep labels computed from one calculation and applied by orbital index to the other
are labels for a different basis, and the filter built from them is not the filter of
Definition~\ref{D:filters}. The same gauge freedom is recorded independently by
\cite{sakuma2026}, who note that irrep weights measured on hardware depend on the
diagonalisation route when a degenerate shell is partially filled.

What separates this mechanism from a property of the filter is that it can be tested.
A property of the point group cannot depend on the numerical route taken to the
orbitals, while a misalignment between two bases varies with that route by
construction. Section~\ref{S:numerics} measures the spread it produces across
independent orbital determinations of the same molecule.

\section{Numerical corroboration}
\label{S:numerics}

The theorems of Sections~\ref{S:critical} to~\ref{S:certificates} hold
unconditionally once their hypotheses do, so no measurement can strengthen
them. What measurement adds is the size of the quantities they bound and the
geometry at which the hypotheses fail. The section measures the classical amplitude problem in two bases and six
molecules, then the unitary ansatz and its product form, then the boundary in
geometry, and finally the failures reported in the literature. Seven experiments comprising 311 runs stand behind
what follows, each self-contained with its plan, its code, its environment
snapshot and its run log. Two limitations hold throughout: the unitary
measurements use STO-3G alone, and the operator counts are counts of distinct
elementary excitation operators, not two-qubit gates. A pool admits a second
count, over the occurrences of an operator in the product rather than over the
operators themselves, and the two disagree because a same-spin double enters
two singlet parameters with opposite signs. Every count reported here is the
first of the two.

Four conventions hold for every figure of this section and are not repeated in
the captions. Energy axes are logarithmic, so a value at or below the floor
printed on the axis is drawn at that floor, and a difference of exactly zero
cannot be drawn at all; where markers are missing for that reason, the caption
says how many. A dashed horizontal line marks chemical accuracy at
1.6~mHa. Where a tick label carries a second line, that line is the
molecule's full point group. Horizontal offsets and jitter within a group are
cosmetic, and the full-group series is hatched so that it stays separable from
the Abelian one in grayscale.

\subsection{Classical amplitudes under both filters, in two bases}
\label{SS:classical}

Confining coupled-cluster amplitudes to the totally symmetric subspace of the
full point group leaves the correlation energy where it was. Under the full
group the amplitude classes fall from 135to 30for ammonia and from 230to 21for methane in
STO-3G (Table~\ref{Tab:classical}). Across the molecules whose point group is
finite the filter costs at most \num{2.5e-9}~mHa. In 6-31G the same
filter costs at most \num{3.4e-12}~mHa while compressing ammonia from
1325to 277classes and methane from
1890to 144. The standard package
\cite{sun2020} cannot be asked for this. It detects the full group and then
works in the largest Abelian subgroup it supports, reducing ammonia to a
single reflection and methane to three orthogonal twofold axes. We therefore
built the projector from the Cartesian action of each group element on the
molecular orbitals and applied it to the amplitudes at every iteration. Water
is the control in Figure~\ref{F:e0}. Its point group is already Abelian, so
the two filters return identical counts, as they must. The lower panel of
that figure carries the price of the compression the upper panel shows: no
filter moves the correlation energy of any of the three by as much as the
tolerance the coupled-cluster solver converges to.

The amplitudes themselves settle the question before any energy is compared.
Converged and unconstrained, they already satisfy full-group invariance to
\num{1.8e-12}in the singles and \num{8.3e-14}in the doubles, so the filter
removes classes that carry no weight rather than discarding weight the
wavefunction needed. That is the vanishing rule of \v{C}\'arsky et
al.~\cite{carsky1987} measured rather than assumed. A specific published
claim is under test. Table 3 of He et al.~\cite{he2026} is a 6-31G table. Its text
states that in that basis the symmetry-filtered ansatz fails systematically
for non-Abelian molecules, and it names HF, LiH, BeH$_2$ and NH$_3$ as cases
that had succeeded in STO-3G. The filter that ansatz implements is the Abelian
one, and on all four of those molecules at 6-31G it costs at most
\num{2.0e-10}~mHa, ten orders of magnitude below chemical accuracy.
Three of those four molecules are linear, so their full point group is
continuous and only the Abelian filter is under test there, which is
precisely the filter in dispute. All of this is classical coupled cluster and
not a variational quantum eigensolver. The amplitude space of CCSD and the
parameter space of singlet unitary coupled cluster are the same space, which
is why the class counts transfer. Neither the ansatz structure nor the
optimiser transfers with them.

\begin{table}[t]
\centering
\small
\caption{Amplitude classes surviving each filter, and the correlation-energy
cost of imposing it, measured against unconstrained CCSD in the same basis
with a restricted Hartree--Fock reference in the adapted basis. A negative
cost means the constrained calculation returned a marginally lower
correlation energy, which at this magnitude reflects the convergence
threshold rather than an ordering. Groups marked $^{\ast}$ are continuous:
the group-average projector is defined for finite groups only, so for the
three linear molecules the full-group column reports the largest finite
subgroup and coincides with the Abelian column by construction. Those three
rows test the Abelian filter, which is the filter under dispute.}
\label{Tab:classical}
\setlength{\tabcolsep}{3.5pt}
\begin{tabular}{llcrrrrr}
\toprule
& & & \multicolumn{3}{c}{Amplitude classes}
& \multicolumn{2}{c}{Cost (mHa)} \\
\cmidrule(lr){4-6}\cmidrule(lr){7-8}
Basis & Molecule & Group & None & Abelian & Full & Abelian & Full \\
\midrule
STO-3G & HF      & $C_{\infty v}^{\ast}$ & 20   & 11  & 11  & \num{-2.2e-11} & \num{-2.2e-11} \\
       & LiH     & $C_{\infty v}^{\ast}$ & 44   & 20  & 20  & \num{-8.7e-14} & \num{-8.7e-14} \\
       & BeH$_2$ & $D_{\infty h}^{\ast}$ & 90   & 23  & 23  & \num{-4.6e-13} & \num{-4.6e-13} \\
       & H$_2$O  & $C_{2v}$              & 65& 26& 26& \num{3.1e-12}  & \num{3.1e-12}  \\
       & NH$_3$  & $C_{3v}$              & 135& 75& 30& \num{2.6e-12}  & \num{2.5e-9}   \\
       & CH$_4$  & $T_d$                 & 230& 65& 21& \num{1.4e-13}  & \num{1.5e-13}  \\
\midrule
6-31G  & HF      & $C_{\infty v}^{\ast}$ & 495& 178& 178& \num{-4.4e-13} & \num{-4.4e-13} \\
       & LiH     & $C_{\infty v}^{\ast}$ & 189& 85& 85& \num{1.5e-10}  & \num{1.5e-10}  \\
       & BeH$_2$ & $D_{\infty h}^{\ast}$ & 495& 125& 125& \num{2.0e-10}  & \num{2.0e-10}  \\
       & H$_2$O  & $C_{2v}$              & 860& 281& 281& \num{-3.4e-12} & \num{-3.4e-12} \\
       & NH$_3$  & $C_{3v}$              & 1325& 717& 277& \num{9.7e-12}  & \num{3.1e-12}  \\
       & CH$_4$  & $T_d$                 & 1890& 495& 144& \num{5.6e-14}  & \num{-1.8e-13} \\
\bottomrule
\end{tabular}
\end{table}

\begin{figure}[t]
\centering
\includegraphics[width=\linewidth]{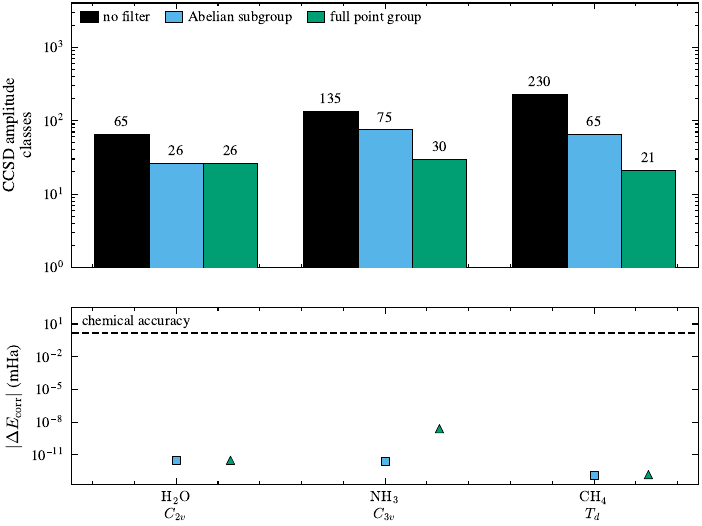}
\caption{Upper panel: CCSD amplitude classes, singles plus doubles, surviving
each filter, for the three molecules of Table~\ref{Tab:classical} whose point
group is finite, in STO-3G from a restricted Hartree--Fock reference in the
adapted basis, each bar labelled with its exact count. The Abelian subgroup is
the one the standard code works in, a proper subgroup of the axis symbol for
NH$_3$ ($C_s$) and CH$_4$ ($D_2$). Lower panel, which carries no legend of its
own: the absolute change each filter causes in the CCSD correlation energy, in
mHa, the Abelian filter as sky blue squares and the full group as green
triangles. Every plotted change lies below
the solver's own convergence tolerance of \num{1e-10}~Ha, so the panel bounds
the cost of each filter rather than resolving it.}
\label{F:e0}
\end{figure}

\subsection{The unitary ansatz}
\label{SS:unitary}

The compression the full point group buys survives the product form a device
would execute, and every measurement in this subsection is STO-3G. We
decomposed each invariant operator on the basis of distinct elementary
excitations and re-exponentiated it as a tied Trotter product, in the natural
and in the reversed order. The decomposition is a rewriting rather than an
approximation, and its largest residual over every molecule and every group
is \num{8.9e-16}. Tying and Trotterising shifts the energy error by at
most 0.0075~mHa (Table~\ref{Tab:unitary}), two orders of magnitude
below chemical accuracy and comparable to the spread between the two
orderings. Neither ordering is preferred, and reporting one of them alone
would understate the uncertainty. The parameters fall from
135to 30for NH$_3$ and from
230to 21for CH$_4$ against unfiltered
unitary coupled cluster, and from 75and 65to
those same counts against the Abelian filter (upper panel of
Figure~\ref{F:e1}). The Abelian comparison is the
honest one, since the Abelian filter is what the literature implements.
Against that same baseline the count of distinct elementary excitation
operators falls from 163to 117for ammonia and from
146to 128for methane. Those two columns are the outer
terms of \eqref{E:hiuccsd} counted. The Abelian entry is the size of
$\mathcal{P}_{\mathrm{Sym}}$ and the full-group entry the size of
$\supp\mathcal{P}_G$, so what the table measures there is how much smaller the
support is than the pool containing it. Water and ethylene carry Abelian
groups, where the two sets coincide, so the full-group column of
Figure~\ref{F:e1} buys them no compression and water returns 48under both filters. In both non-Abelian molecules the parameter count falls by
the larger factor, so the filter removes optimiser dimension first and circuit
size second. Circuit depth was not measured and no correlation-consistent basis
was tested, so nothing here carries a claim about device resources.

One entry of Figure~\ref{F:e1} is not yet accounted for. The Abelian filter can
be built two ways, as a subset of the pool selected by irrep label, which is
what the literature implements, or by applying the symmetric projector; the two
return the same operator count for water and methane and different counts for
ammonia, where the projector construction reads 309distinct
operators against the 163monomials that carry the trivial character
of $C_s$. That pool is therefore not invariant under the subgroup its label
names. Two readings fit, a misaligned reflection plane or a coefficient cut
admitting representation noise, and neither is settled here; the comparisons
drawn above are made against the pool subset throughout.

Proposition~\ref{P:invariance} states an identity that had never been checked
directly. We drew 20parameter vectors per molecule away from the
symmetric subvariety, where the sign flip acts trivially and the test would
be empty, and evaluated both energies through the same circuit. The largest
difference was \num{5.3e-12}~Ha for ammonia and \num{3.6e-15}~Ha for methane, and
the state-level identity held exactly. For methane the word exactly is literal:
of the sixty measurements, fifty-four returned a difference of zero to the last
bit, and all twenty for ammonia were non-zero but at machine precision. The
gradient along the removed
directions stayed below \num{2.2e-10}~Ha, which is Theorem~\ref{T:critical}
measured. Rotating one degenerate orbital pair by thirty degrees while
keeping its irrep labels, the misalignment a second independent
self-consistent-field calculation introduces, raises the same difference to
\num{5.4e-3}and \num{8.1e-3}~Ha (Figure~\ref{F:e4}). That
separation is what distinguishes a measured theorem from an arithmetic
identity, because a violation of the size the literature reports would have
been visible.

\begin{table}[t]
\centering
\small
\caption{Parameters and distinct elementary excitation operators for four
ans\"atze, STO-3G, with the energy error against the exact ground state of
the same Hamiltonian in mHa. The operator count is a count of the distinct
elementary excitation operators a pool touches, not of their occurrences in
the product; it is not a two-qubit gate count, and circuit depth was not
measured. The Abelian row is
built as a subset of the excitation pool selected by irrep label, which is
how the literature implements the filter; the two Trotter columns give the
natural and reversed orderings of the same product, which bound the ordering
dependence.}
\label{Tab:unitary}
\begin{tabular}{llrrrrr}
\toprule
Molecule & Ansatz & Par. & Oper.
& Exact exp. & Trotter, nat. & Trotter, rev. \\
\midrule
H$_2$O / $C_{2v}$ & unfiltered            & 65 & 140 & 0.1203 & --- & --- \\
                  & Abelian               & 26 & 48  & 0.1203 & --- & --- \\
                  & full group            & 26 & 48& 0.1199 & 0.1199 & 0.1144 \\
\midrule
NH$_3$ / $C_{3v}$ & unfiltered            & 135& 315& 0.1392 & --- & --- \\
                  & Abelian               & 75& 163& 0.1392 & --- & --- \\
                  & full group            & 30& 117& 0.1406 & 0.1476 & 0.1415 \\
\midrule
CH$_4$ / $T_d$    & unfiltered            & 230& 560& 0.1942 & --- & --- \\
                  & Abelian               & 65& 146& 0.1942 & --- & --- \\
                  & full group            & 21& 128& 0.1940 & 0.1952 & 0.1944 \\
\bottomrule
\end{tabular}
\end{table}

\begin{figure}[t]
\centering
\includegraphics[width=\linewidth]{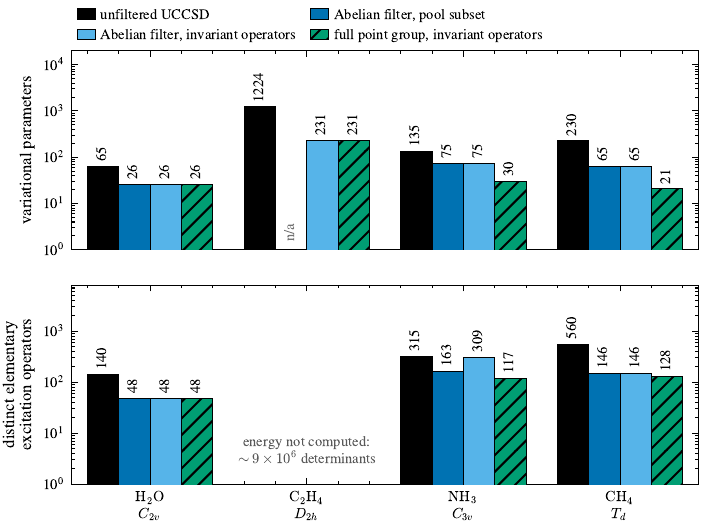}
\caption{Variational parameters (upper panel) and distinct elementary
excitation operators (lower panel) for four ans\"atze across four molecules,
STO-3G, every bar labelled with its exact count. The two Abelian bars are the
same subgroup built the two ways the text distinguishes; they differ only for
ammonia. Water and ethylene are the Abelian controls and return equal counts
under the last two filters. Ethylene was measured by character counting alone,
so it carries no pool-subset bar and no operator counts, its determinant space
being about nine million.}
\label{F:e1}
\end{figure}

\begin{figure}[t]
\centering
\includegraphics[width=0.75\linewidth]{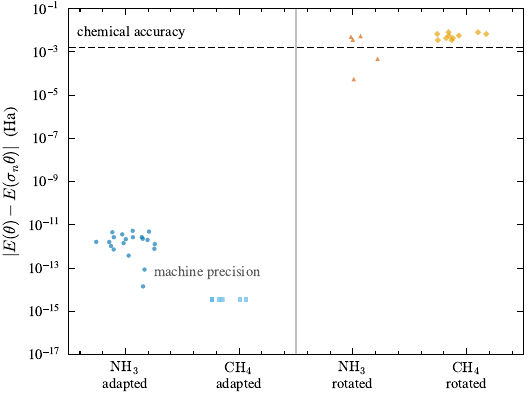}
\caption{Violation of the invariance identity of
Proposition~\ref{P:invariance}, one point per parameter vector and group
element, in STO-3G with the vectors drawn uniformly on $[-0.5, 0.5]$. Left of
the grey rule, the adapted basis: NH$_3$ in $C_s$ (blue circles, 20vectors, one non-trivial element) and CH$_4$ in $D_2$ (sky blue squares,
20vectors, three non-trivial elements). Right of it, the control:
one degenerate orbital pair rotated by thirty degrees with the irrep labels
left stale, five vectors per molecule. Fifty-four of the sixty adapted CH$_4$
measurements are exactly zero and cannot be drawn, as are five of the fifteen
CH$_4$ controls. The closest approach between the two sides is seven decades
for NH$_3$ and twelve for CH$_4$.}
\label{F:e4}
\end{figure}

\subsection{Where the freeness ends}
\label{SS:domain}

Freeness has a domain, and locating its edge is a result of the same standing
as the freeness itself. Ammonia was stretched at fixed H--N--H angle from
0.90to 2.00~\AA{} in STO-3G (Table~\ref{Tab:domain}). Over that scan
the correlation energy grows from 50.7to 439.5~mHa while the gap to
the first excited state collapses from 628.1to 20.3~mHa. The scan
therefore reaches the static-correlation regime it was built to probe. The
Abelian filter costs nothing anywhere in it. Its largest deviation over the
whole scan is \num{7.1e-10}~mHa, while the unfiltered error itself grows a
hundredfold. At the equilibrium geometry, Proposition~\ref{P:enclosure} with
the exhibited point $\theta^\star$ bounds the cost of the full-group filter
by 0.176~mHa for NH$_3$ and 0.194~mHa for CH$_4$ in
STO-3G.

The full-group filter behaves differently along the scan. Its cost stays
negligible near equilibrium and then rises: in the exact-exponential form it
reaches 0.0093mHa at \num{1.40}~\AA, 0.123at
\num{1.60}, 3.85at \num{1.80} and 6.94at
\num{2.00}. In the tied Trotter form the last three distances give
0.734, 6.15and 12.59~mHa
(Figure~\ref{F:e2}). Chemical accuracy is crossed between \num{1.60} and
\num{1.80}~\AA{} in both forms. Where the cost exceeds chemical accuracy, at
\num{1.80} and \num{2.00}~\AA{}, the two realisations of the same
30-parameter space agree within a factor of two, so there the
cost belongs to the invariant subspace and not to the Trotter product. Below
chemical accuracy the two forms carry no such relation, their ratio swinging
by more than two orders of magnitude across the scan while both stay
negligible. Without the exact-exponential control the rise could have been
read as an artefact of the product form. None of this contradicts
Section~\ref{S:critical}. What Theorem~\ref{T:critical} makes critical is the
subvariety cut out by the characters of the Abelian subgroup $N$, at every
geometry and with no domain attached. The boundary located here belongs to the
full-group filter alone. Criticality of the symmetric subvariety is also not
equality of the two infima, and Proposition~\ref{P:enclosure} binds only at
the geometry where $\theta^\star$ was exhibited. At \num{2.20}~\AA{} the
restricted reference stops converging, which is where hypothesis~\ref{H:A2}
ceases to hold. That point is logged as a failure and discarded. Retrying
from a different initial guess would have hidden the boundary the scan was
built to find.

\begin{table}[t]
\centering
\small
\caption{Stretching NH$_3$ at fixed H--N--H angle of $106.7^\circ$ in
STO-3G, 3136 determinants. The last three columns are energy differences from
unfiltered unitary coupled cluster, in mHa, so they are the cost of each
filter at that geometry; the two full-group columns are the same
30-parameter invariant space realised as exact exponentials and
as a tied-parameter Trotter product. The correlation energy and the
$S_0$--$S_1$ gap, which fix the regime each geometry belongs to, are plotted
in the lower panel of Figure~\ref{F:e2}. The \num{2.20}~\AA{} row carries no
measurement, for the reason the text gives.}
\label{Tab:domain}
\begin{tabular}{rrrrr}
\toprule
$r_{\mathrm{NH}}$ (\AA) & UCCSD err.\ (mHa) & Abelian, 75par.
& Full, 30par. & Full, 30par. \\
 & & & exact exp. & Trotter \\
\midrule
0.90   & 0.0992& \num{-1.8e-10} & \num{-4.0e-5} & \num{5.1e-6} \\
1.0124 & 0.1757    & \num{-1.5e-10} & \num{-5.9e-5} & \num{2.1e-3} \\
1.10   & 0.2831    & \num{-1.4e-11} & \num{-2.9e-4} & \num{-2.0e-5} \\
1.25   & 0.6033    & \num{7.1e-12}  & \num{-3.6e-3} & \num{8.8e-3} \\
1.40   & 1.1049    & \num{-7.1e-10} & 0.0093& 0.0859 \\
1.60   & 2.0740    & \num{9.9e-11}  & 0.123& 0.734\\
1.80   & 4.1418    & 0.0            & 3.85& 6.15\\
2.00   & 9.5875& \num{4.3e-10}  & 6.94& 12.59\\
\bottomrule
\end{tabular}
\end{table}

\begin{figure}[t]
\centering
\includegraphics[width=\linewidth]{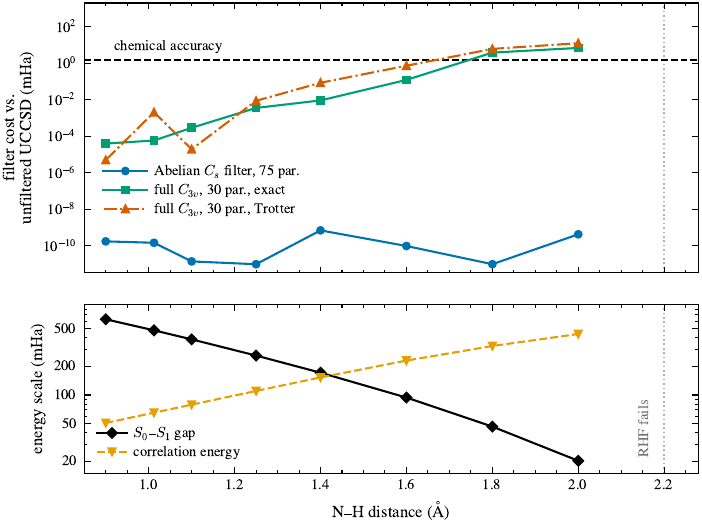}
\caption{Upper panel: absolute energy difference from unfiltered unitary
coupled cluster, against the N--H distance, for NH$_3$ in STO-3G at a fixed
H--N--H angle of $106.7^\circ$ --- the Abelian $C_s$ filter with
75parameters, and the full $C_{3v}$ filter with
30parameters in both realisations, exact exponentials and a
tied Trotter product. The Abelian series stays more than nine decades below
chemical accuracy across the whole scan; both full-group series cross it
between \num{1.60} and \num{1.80}~\AA. Lower panel: the $S_0$--$S_1$ gap and
the absolute correlation energy, which exchange rank between \num{1.40} and
\num{1.60}~\AA{} and so place the far end of the scan in the
static-correlation regime.}
\label{F:e2}
\end{figure}

\subsection{The reported failures}
\label{SS:artifact}

A rule that counts degenerate orbital shells at the mean-field level
separates the molecules reported as failing from those reported as working,
ten of tenin the minimal basis (Figure~\ref{F:e7}). The rule runs no
variational calculation at all. A property of the point group cannot be
predicted by a mean-field shell count. What such a count can predict is a
property of the orbitals rather than of the filter.

Re-running the published two-calculation pipeline for methane in independent
processes returns a filtered energy error anywhere between 15.9and 49.7~mHa (Table~\ref{Tab:artifact}). Over the same processes
the unfiltered error moves by \num{1.7e-4}~mHa, and the same filter
applied with labels taken from the calculation that defines the Hamiltonian
costs 0.194~mHa. The spread appears only when the eigensolver is
given more than one thread (lower panel of Figure~\ref{F:e7}). Single-threaded
runs return 49.722~mHa in ten consecutive processes, and that
value sits above every multi-threaded replicate. It is reported as measured,
since it is what a reader following the published reproducibility settings
would obtain. Ammonia, whose degenerate shells are twofold where those of
methane are threefold, is reproducible at every thread count and returns
29.257~mHa at all of them. The angle between the degenerate subspaces
the two calculations select is fixed there as well, at 60.0degrees. That angle summarises the mismatch, and the deviation is the cost of
the mismatch itself rather than of a rotation by that angle. The
published values this distribution has to account for, 40.8and 27.8~mHa, both fall inside the measured range. A point-group
property cannot depend on how many threads a linear-algebra library was
given, so what these numbers measure is the labelling step of
Section~\ref{S:certificates} and not the filter.

\begin{table}[t]
\centering
\small
\caption{Filtered energy error for CH$_4$ and NH$_3$ in STO-3G, measured
across independent operating-system processes running the published
two-calculation pipeline end to end, grouped by the number of threads given
to the dense linear-algebra library. Replicates are processes rather than
seeds because the quantity that varies is not under the program's control:
it enters through the way the eigensolver resolves a degenerate subspace.
The arm at eight threads for CH$_4$ carries six of the ten replicates
requested and was stopped because two concurrent eight-thread processes
oversubscribe an eight-core machine; the finding rests on the complete arms
at two and four threads.}
\label{Tab:artifact}
\begin{tabular}{lrrrrr}
\toprule
Molecule & Threads & Replicates & Distinct values
& Min (mHa) & Max (mHa) \\
\midrule
CH$_4$ & 1 & 10 & 1  & 49.722& 49.722\\
       & 2 & 10 & 10 & 15.922 & 46.637 \\
       & 4 & 10 & 10 & 18.001 & 44.829 \\
       & 8 & 6  & 6  & 31.841 & 46.614 \\
\midrule
NH$_3$ & 1, 2, 4, 8 & 50 & 1 & 29.257& 29.257\\
\bottomrule
\end{tabular}
\end{table}

\begin{figure}[t]
\centering
\includegraphics[width=\linewidth]{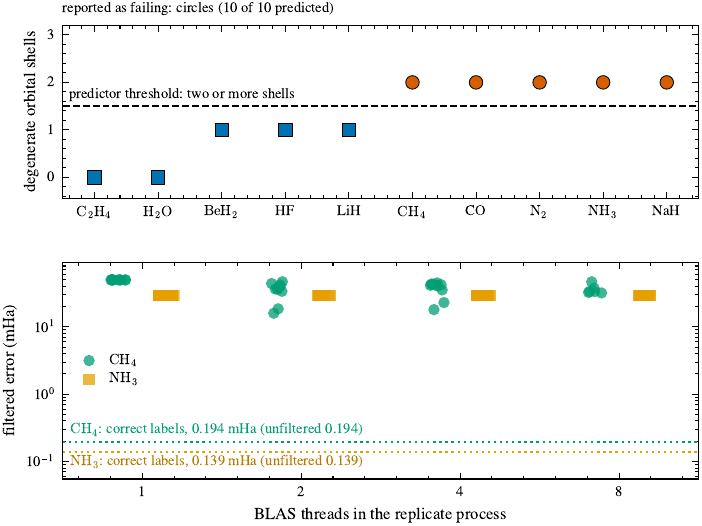}
\caption{Upper panel: degenerate orbital shells at the mean-field level for ten
benchmark molecules in STO-3G, against the outcome reported in the literature,
two orbitals counting as degenerate when their energies agree to
\num{1e-6}~Ha. The threshold rule runs no variational calculation. Lower panel:
the filtered energy error of Table~\ref{Tab:artifact}, in mHa, one point per
independent process, against the threads that process was given; that table
gives the replicate count of each arm. The dotted lines are the
correctly-labelled reference the text quotes, labelled in place. Colour means
the reported outcome above and the molecule below; the two panels share none.}
\label{F:e7}
\end{figure}

A second mechanism of the same class operates before any filter is applied, on
the enumeration of the pool itself. That enumeration is worth writing out,
since it is where the error entered. For NH$_3$ in STO-3G the reference leaves
five occupied and three virtual spatial orbitals. The singlet parameterisation
carries one amplitude per occupied-virtual pair and one per unordered pair of
those, 135in all, and expanding them over spin orbitals
gives the 315distinct elementary excitations of
Table~\ref{Tab:unitary}: thirty singles, sixty same-spin doubles, and
225opposite-spin doubles, the last counting the occupied and the
virtual index pairs as ordered.

Enumerating that last channel as combinations with replacement over occupied
and virtual pairs instead generates 90, so the pool is incomplete
from the start, and running the unrestricted ansatz on that same pool
reproduces the deviation that the filter is then credited with. What a
construction of that kind measures is the pool and not the filter. On a
complete pool Corollary~\ref{C:protocol} makes the two ans\"atze return the
same energy exactly, and the ten digits to which they agree here are that
identity as finite precision realises it. The deviation of 21.8~mHa for
NH$_3$ in STO-3G reported in versions 1 and 2 of this work \cite{dasilva2026}
has that origin.

The reading of the dynamical Lie algebra fails for a separate reason.
Restriction of that algebra to a torus does not imply variational
incompleteness, since reaching the full unitary group of a degenerate shell is
not what the variational problem requires, the target state being totally
symmetric and the filtered generators reaching every state the invariant ones
reach (Theorem~\ref{T:span}, Corollary~\ref{C:reach} and
Remark~\ref{R:orbit}). That algebra is not a torus in
any case. Of the retained generator pairs for that system,
2191of 2775have non-vanishing commutators, which is
the witness Proposition~\ref{P:dla} records in general.

What the section supports is the Abelian filter at no measurable cost wherever
the hypotheses of Section~\ref{S:setup} hold, and the full-group filter at no
measurable cost near equilibrium, together with the geometry beyond which the
latter degrades. It does not support reading the reported non-Abelian failures
as a property of the filter. Two things remain
unmeasured: the unitary ansatz in any basis larger than STO-3G, and any claim
about device-level resources.

\section{Scope}
\label{S:scope}

Every statement proved above rests on hypothesis~\ref{H:A2}, an invariant
reference. It holds for a closed-shell restricted determinant built from
adapted orbitals, which is the setting of the calculations discussed here.
It fails in three places. An open-shell reference is not invariant under the
full group. A target state carrying a different irreducible representation
from the reference, as an excited state does, breaks the parity argument that
Theorem~\ref{T:critical} runs at an arbitrary point of the subvariety. An
orbitally degenerate ground state fails the same way. In those settings the
gradient along a removed direction need not vanish, and the filter can cost
something real. Abandoning the symmetric reference carries a price of its own.
Bertels et al.~\cite{bertels2022} and Tsuchimochi et
al.~\cite{tsuchimochi2022} report independently that symmetry breaking
lengthens the ansatz and degrades the optimisation. The hypothesis is
therefore a methodological choice with evidence behind it, and the boundary it
draws is one the alternative does not remove.

A reader might expect Theorem~\ref{T:critical} to carry with it the positivity
of the curvature transverse to the symmetric subvariety at the optimum, which
would upgrade the critical subvariety to a local minimum. That statement is
absent and may not be inferred from the theorem. We have measured the curvature
at one geometry in one molecule and we do not state it as a result, because we
expect it to be false in general, for the same reason that an unrestricted
Hartree--Fock solution can lie below the restricted one. The equality of the
restricted and the unrestricted infima is absent on the same terms. We do not
prove it, and we expect it to fail in general. What stands in its place is
Proposition~\ref{P:enclosure}, which uses
the variational principle and one exhibited point, and which binds at the
geometry where that point was exhibited. Away from that geometry the enclosure
is silent, and Section~\ref{S:numerics} measures what happens there.

Continuous groups lie outside every theorem stated here. The group-average
projector $|G|^{-1}\sum_g$ is a finite sum, and for $C_{\infty v}$ and
$D_{\infty h}$ the corresponding object is an integral over the group.
Shkolnikov et al.~\cite{shkolnikov2023} meet the same obstruction and reduce
to a finite Abelian subgroup, which is precedent for declaring the reduction
and not for treating the continuous group. This is why the three linear
molecules of Section~\ref{S:numerics} test the Abelian filter only.

The two halves of the measurement have different reach. The classical amplitude
result spans six molecules in two basis sets, one minimal and one split
valence; the unitary result is STO-3G throughout, so nothing here supports a
claim about the unitary ansatz at a larger basis. Two further limits are stated
where the measurements are, at the head of Section~\ref{S:numerics}: what the
operator counts count, and that circuit depth was never measured.

Remark~\ref{R:hisupport} carries the one statement of
Section~\ref{S:certificates} that is measured and not proved, and says what a
proof of the missing direction would have to exclude. Only the inclusion of
Proposition~\ref{P:hisupport} is used.

The classical treatment of point-group symmetry in coupled-cluster theory is
older than the quantum-computing literature that inherited it, and
Remark~\ref{R:priority} records what each of those papers established. We claim
no priority for exploiting non-Abelian point-group symmetry, which
\v{C}\'arsky et al.~\cite{carsky1987} did in 1987, on ammonia and in $C_{3v}$.
The group-average projector is not presented as new either, since it is
Eq.~(6) of \cite{haser1991}. Stanton et al.~\cite{stanton1991} put the saving
at a factor of up to $h^2$ in the group order $h$ and estimated a
factor of about 500 for $T_d$. Nor is the $1/h$ scaling of the retained
amplitude count a discovery here, since Cao et al.~\cite{cao2022} and Greiner
et al.~\cite{greiner2024} both publish it. What is left once those are set
aside is the unitary counterpart, and Section~\ref{S:conclusion} states it.

Four lines of work reach the same object from another side. Vaquero-Sabater et
al.~\cite{vaquero2025} remove from an adaptive ansatz those operators whose
amplitudes come out near zero, which is an empirical test applied once the
optimisation has run. The group filter identifies which operators are exactly
zero before the run, and says why. Greiner et al.~\cite{greiner2024} exploit
the full non-Abelian group on H$_2$O, NH$_3$ and CH$_4$, three of the six
molecules studied here, and what they compress are
increments of a classical many-body expansion, not parameters of a variational
ansatz. They record the $1/h$ bound as one those three systems never formally
attain, which the compression ratios measured here approach from above without
reaching. Adaptive constructions in the manner of Grimsley et
al.~\cite{grimsley2019} are covered by Theorem~\ref{T:span}. Any pool lying in
the adapted span inherits the same invariant span, so whether an adaptive
scheme gains anything over the filter depends on whether its candidate pool
extends past it. Picozzi and Tennyson \cite{picozzi2023,picozzi2026} compress
qubits rather than amplitudes, a different currency, and their counts are
independent of the ones reported here.

\section{Conclusion}
\label{S:conclusion}

Filtering a unitary coupled-cluster ansatz by the Abelian subgroup that
current practice uses costs nothing variationally, and that statement carries
no geometric qualifier. Theorem~\ref{T:critical} makes the subvariety $V$ of
\eqref{E:partition}, which the characters of $N$ define, a critical subvariety
of the energy, so the gradient along every removed direction vanishes at every
point of it and not at the reference alone. Corollary~\ref{C:protocol} turns
that into a statement about the optimisation trajectory.
Filtering by the full point group compresses further than
the Abelian subgroup does, by the larger factor in parameters and by a smaller
one in distinct excitation operators, and there the freeness is measured rather
than proved, holding near equilibrium and degrading beyond it.
Theorem~\ref{T:span} supplies the containment
$\Span\mathcal{P}_G\subseteq\Span\mathcal{P}_{\mathrm{Sym}}$, and
Corollary~\ref{C:reach} carries it to the reachable sets, which makes the
smaller orbit sufficient against the fully invariant pool at every geometry.

Between those two pools sits the set the equivariant one touches.
Proposition~\ref{P:hisupport} places the Hamiltonian-informed pool inside that
support, so in an adapted basis the two constructions the recent literature
treats as alternatives draw from one set, and in every case measured here they
draw the same operators from it. Remark~\ref{R:a1content} separates that set
from the larger one an $A_1$ shell label admits, which is where versions 1 and
2 of this work went wrong.

The consequences reach a working calculation. Coupled-cluster amplitudes can
be filtered by the full point group with the correlation energy retained,
whenever the reference is closed-shell and the orbitals are adapted. A
published pair of parameter counts can be read as a statement about the
orbital basis, by Corollary~\ref{C:certificate}, with no calculation required
of the reader. The freeness of the full-group filter also has a measured
geometric boundary. The stretching scan of Section~\ref{S:numerics} runs from
0.90to 2.00~\AA. The Abelian filter stays free at every geometry on
it, while the full-group filter crosses chemical accuracy of 1.6~mHa
between 1.60 and 1.80~\AA{} and degrades from there.

Stanton et al.~\cite{stanton1991} described the non-Abelian generalisation as
known and unimplemented in 1991, and \v{C}\'arsky et al.~\cite{carsky1987} and
H\"aser et al.~\cite{haser1991} had already carried it out on the classical
side. What this paper adds is the unitary counterpart,
a variational statement valid for degenerate irreducible representations
together with the compression that follows from it.

The unitary result should be repeated in a correlation-consistent basis, where
the virtual space is large enough that the compression ratios measured here
need not carry over. Whether the Hamiltonian-informed pool exhausts the
equivariant support is open, and a proof would have to exclude accidental
vanishing of the integrals instead of arguing from symmetry, which constrains
only the coefficients that must vanish. The open-shell and orbitally degenerate settings, where
hypothesis~\ref{H:A2} fails, are where a real variational cost would first
appear, and where the parity argument has to be replaced by an argument of a
different kind. That is what decides how far the freeness of the filter
extends.

\section*{Declarations}

\paragraph{Data and code availability.}
The measurements reported in Section~\ref{S:numerics} were produced by seven
self-contained experiments, each carrying its plan, its code, its
environment snapshot, its aggregated results and its run log. The
mean-field and coupled-cluster calculations use PySCF~\cite{sun2020} and the
second-quantised operator algebra uses OpenFermion~\cite{mcclean2020}. The seven
experiments, the 311 runs behind them, the aggregated results, the
figures of this paper and a from-scratch re-derivation of the results,
sharing no code with the implementation that produced them, are available at
\url{https://github.com/ldsufrpe/point-group-vqe-free}, under the MIT licence
for the code and CC~BY~4.0 for the data and figures.
\ifversionnote
The repository accompanying versions 1 and 2 of
this work, \texttt{ldsufrpe/\allowbreak nh3-symuccsd-confinement}, reproduces the
superseded result and is being replaced together with this version.
\fi

\paragraph{Competing interests.}
The authors declare no competing interests.

\paragraph{Author contributions.}
L.D.S.\ and M.P.S.\ conceived the study and developed the mathematical
results. L.D.S.\ implemented and ran the numerical experiments. M.Z.\ carried
out the from-scratch re-derivation and verified the numerical claims. All
authors contributed to the writing and approved the final manuscript.

\bibliographystyle{spmpsci}
\bibliography{references}

\end{document}